\useunder{\uline}{\ul}{}
\numberwithin{equation}{section}
\newtheorem{theorem}{Theorem} 
\theoremstyle{plain}
\newtheorem{assumption}{Assumption} 
\numberwithin{equation}{section}
\renewcommand{\v}[1]{\boldsymbol{#1}}
\newcommand{\ci}{\mathrm{i}}
\begin{document}
\title[Multivariate Spectral Subsampling MCMC]{Spectral Subsampling MCMC for Stationary Multivariate Time Series with Applications to Vector ARTFIMA Processes}
\author{Mattias Villani, Matias Quiroz, Robert Kohn, and Robert Salomone}
\thanks{
Villani: \textit{Department of Statistics, Stockholm University, SE-106 91 Stockholm, Sweden} and 
\textit{Department of Computer and Information Science, Link{\"o}ping University}.
\textit{E-mail: mattias.villani@stat.su.se}.
Quiroz: \textit{School of Mathematical and Physical Sciences, University of Technology Sydney}.
Kohn: \textit{School of Business, University of New South Wales}.
Salomone:  \textit{Centre for Data Science, Queensland University of Technology}.
}

\begin{abstract}
Spectral subsampling MCMC was recently proposed to speed up Markov chain Monte Carlo (MCMC) for long stationary univariate time series by subsampling periodogram observations in the frequency domain. This article extends the approach to multivariate time series using a multivariate generalisation of the Whittle likelihood. To assess the computational gains from spectral subsampling in challenging problems, a multivariate generalisation of the autoregressive tempered fractionally integrated moving average model (ARTFIMA) is introduced and some of its properties derived. Bayesian inference based on the Whittle likelihood is demonstrated to be a fast and accurate alternative to the exact time domain likelihood. Spectral subsampling is shown to provide up to two orders of magnitude additional speed-up, while retaining MCMC sampling efficiency and accuracy, compared to spectral methods using the full dataset. \\
                    Keywords: Bayesian, Markov chain Monte Carlo, Semi-long memory, Spectral analysis, Whittle likelihood.
\end{abstract}

\maketitle

\section{Introduction}
Recent technological developments in sensors, data storage and computing power make it possible to collect high frequency time series data at low cost; some examples are financial transaction data \citep{mykland2012econometrics}, neuroimaging data with high temporal resolution \citep{chen2019analysis}, sensor data from robots \citep{deisenroth2013gaussian} or meteorological weather stations, and GPS and smart card data used in transportation \citep{welch2019big}.

However, statistical analysis of time series with tens of thousands, hundreds of thousands, or even millions of data points is computationally challenging, especially when inferences are obtained with iterative methods such as Markov chain Monte Carlo (MCMC) simulation or stochastic optimisation algorithms, where the likelihood is evaluated a large number of times. It is therefore common to only use a portion of the data for inference, for example only the most recent observations, or by systematically selecting every $k$th data point over the study period. Such downsampling wastes valuable data, gives less precise inferences, and is not even an option when predictions are required at the original sampling frequency.

\citet{salomone2019spectral} propose \emph{spectral subsampling MCMC} to accelerate MCMC for long stationary univariate time series. They use the asymptotically motivated \textit{Whittle likelihood} \citep{whittle1953analysis} to approximate the likelihood of a stationary time series. The Whittle likelihood is based on the discrete Fourier transform with the important property of transforming a time series with dependent observations to \emph{asymptotically independent} periodogram observations in the frequency domain. The key insight in \citet{salomone2019spectral} is that such independence makes it possible to extend subsampling MCMC approaches for independent data \citep{quiroz2018delayed, quiroz2019speeding, quiroz2020block, dang2019hamiltonian} to univariate stationary time series by systematic subsampling of periodogram observations. 

All of the applications mentioned above are naturally analysed in a multivariate setting: financial portfolios consisting of many assets, neuroimaging data simultaneously measured at multiple brain locations, meteorological data collected at several spatial locations, and so on. The multivariate aspect naturally leads to even more demanding computations in a high frequency setting. 

The main contribution in this article is extending the spectral subsampling MCMC methodology to stationary \emph{vector-valued} time series by using a multivariate version of the Whittle likelihood based on the asymptotic properties of the matrix-valued periodogram. The proposed multivariate spectral subsampling MCMC algorithm is evaluated on three challenging large-scale multivariate time series applications from meteorology, hydrology and environmental science. 

The multivariate Whittle likelihood is of general interest for large-scale likelihood and Bayesian inference beyond subsampling MCMC. Even without subsampling, the Whittle likelihood is substantially more scalable to large data as it sidesteps the costly matrix inversions needed for the exact likelihood in the time domain. But the independence of periodogram observations also makes the multivariate Whittle likelihood directly useful for computing the unbiased gradient estimates from random subsets/batches of frequencies needed for large-scale variational inference \citep{tran2017variational} or maximum likelihood estimates from stochastic gradient descent algorithms \citep{goodfellow2016deep}. An additional contribution of our article is that our empirical results clearly demonstrate that the posterior based on the multivariate Whittle likelihood gives an excellent approximation to the exact time domain posterior, even in complex models such as the VARMA model.

A challenge for subsampling MCMC methods is keeping the variance of the likelihood estimator small enough for the MCMC chain to mix well. Previous literature is therefore typically restricted to applications using models with a very small number of parameters, or models with a moderate number of parameters with a simple structure, such as logistic regression. To test our multivariate spectral subsampling MCMC methodology on a challenging set of problems, we introduce a new multivariate model with semi-long range dependence that extends the autoregressive tempered fractionally integrated moving average (ARTFIMA, \citet{Sabzikar2019}) to the multivariate setting. Several properties of the vector ARTFIMA model are derived, including the spectral density matrix needed for the Whittle likelihood. For all examples considered, the tempered fractional differencing is shown to improve upon the standard multivariate vector autoregressive integrated moving average (VARIMA) model in terms of the Bayesian information criterion. Our empirical examples show that spectral subsampling works well and gives a very large speed-up for time series that are long enough to make the variance reducing control variates effective. We also highlight the limitation of subsampling MCMC by showing empirically that spectral subsampling MCMC can get stuck when estimating complex multivariate models on shorter time series.

The rest of the article is organised as follows. Section \ref{sec:multivar_spectral_mcmc} presents the Whittle likelihood for multivariate stationary time series. Section \ref{sec:subsampling_mcmc} outlines the subsampling MCMC methodology. Section \ref{sec:Models} presents the models considered in our applications and establishes properties needed for the implementation of spectral methods. Section \ref{sec:Applications} demonstrates the methodology and evaluates the efficiency of the proposed spectral subsampling algorithm on real data. Section \ref{sec:univariate} compares our proposed vector ARTIFIMA to the alternative of separately estimating univariate ARTFIMA models for each series \citep{Sabzikar2019}, both in terms of estimated parameters and in forecasting. Section \ref{sec:Conclusions} concludes and discusses future research. Appendix \ref{app:proofs} derives some properties of the vector ARTFIMA model and Appendix \ref{app:additionalresults} contains additional empirical results.

\section{The Whittle likelihood for multivariate time series}\label{sec:multivar_spectral_mcmc}
Let $\mathbf{X}_t \in \mathbb{R}^r$ be an $r$-variate zero mean stationary time series with absolutely summable autocovariance matrix function
\begin{equation}
    \v\gamma_{\mathbf{X}}(\tau) = \mathrm{Cov}(\mathbf{X}_t,\mathbf{X}_{t-\tau}),\text{ for } \tau \in \mathbb{Z},
\end{equation}
where $\mathbb{Z}$ is the set of integers. 
The spectral density matrix is
\begin{equation}
    f_{\mathbf{X}}(\omega) = \frac{1}{2\pi}\sum_{\tau=-\infty}^\infty 
        \v\gamma_{\mathbf{X}}(\tau)\exp(-\ci \omega \tau), \text{ for } \omega \in (-\pi,\pi],
\end{equation}
with the diagonal elements being the usual spectral density for each univariate time series and the off-diagonal
elements are the cross-spectral densities 
\begin{equation}
    f_{jk}(\omega) = \frac{1}{2\pi}\sum_{\tau=-\infty}^\infty \gamma_{jk}(\tau)\exp(-\ci \omega \tau), \text{ for } \omega \in (-\pi,\pi].
\end{equation}   
Since the elements of $\v\gamma_{\mathbf{X}}(\tau)$ are real, $f_{\mathbf{X}}(\omega)$ is Hermitian, i.e.\ $f_{\mathbf{X}}(\omega)^H=f_{\mathbf{X}}(\omega)$, where $\mathbf{A}^H = (\overline{\mathbf{A}})^\top$ is the conjugate transpose of a matrix and $\overline{\mathbf{A}}$ is the matrix of complex conjugates of the elements of $\mathbf{A}$. \citet{brillinger2001time}[Theorem 2.5.1] proves that $f_{\mathbf{X}}(\omega)$ is also non-negative definite.

The discrete Fourier transform (DFT) of the multivariate time series $\{\mathbf{X}_t\}_{t=0}^{T-1}$ is
\begin{equation}
    J_T(\omega_k) = \sum_{t=0}^{T-1} \mathbf{X}_t \exp(-\ci \omega_k t),
\end{equation}
for $\omega_k \in \Omega_T = \{2\pi k/T \text{ for } k=-\lceil T/2\rceil+1,\ldots,\lfloor T/2 \rfloor\}$, the set of Fourier frequencies.
Let $\v X \sim \mathrm{CN}(\v \mu, \v \Sigma)$ denote that the $r$-dimensional complex-valued vector $\v X$ follows the multivariate complex normal distribution \citep[Ch. 4.2]{brillinger2001time}, i.e.\ that
\begin{align*}
    \begin{pmatrix}
        \mathrm{Re}~\v X \\
        \mathrm{Im}~\v X  
    \end{pmatrix}
    \sim \mathrm{N}_{2r} \Bigg[ 
    \begin{pmatrix}
        \mathrm{Re}~\v \mu \\
        \mathrm{Im}~\v \mu 
    \end{pmatrix},
    \frac{1}{2}\begin{pmatrix}
        \mathrm{Re}~\v \Sigma & -\mathrm{Im}~\v \Sigma\\
        \mathrm{Im}~\v \Sigma & \hphantom{-}\mathrm{Re}~\v \Sigma
    \end{pmatrix}
    \Bigg],
\end{align*}
where $\v \Sigma$ is a Hermitian positive definite covariance matrix.
The $J_T(\omega_k)$ are asymptotically independent over the different frequencies and 
\citep[Theorem 4.4.1]{brillinger2001time}
\begin{equation}
    \frac{1}{\sqrt{T}}J_T(\omega_k) \sim \mathrm{CN}(0,2\pi f_{\mathbf{X}}(\omega_k)) \text{ as } T\rightarrow \infty,
\end{equation}
except at $\omega_k=0$ and $\omega_k=\pi$, where instead $(1/\sqrt{T})J_T(\omega) \sim \mathrm{N}_r(0,2\pi f_{\mathbf{X}}(\omega_k))$ as $T\rightarrow \infty$. Following \citet{brillinger2001time}, we will assume that $f_{\mathbf{X}}(\omega)$ is non-singular.

The \emph{periodogram} ordinate at frequency $\omega$ is defined as 
$$I_T(\omega) = (2\pi T)^{-1}J_T(\omega)J_T(\omega)^H.$$ 
The periodogram ordinates are therefore asymptotically independent complex Wishart distributed with one degree of freedom $I_T(\omega) \sim 
\mathrm{CW}_r(1,f_{\mathbf{X}}(\omega))$ \citep[Theorem 7.2.4]{brillinger2001time}, except for the frequencies $\omega_k=0$ and $\omega_k=\pi$ where the $I_T(\cdot)$ instead follow a (real) Wishart distribution. The periodogram ordinates $I_T(\omega)$ are singular matrices for $r>1$. The density function of this singular Wishart distribution is derived for the real case by \cite{uhlig1994singular} with respect to the Hausdorff measure, and by \citet{srivastava2003singular} with respect the Lebesgue measure for the functionally independent elements of the matrix. The density for the  complex singular Wishart distribution $\mathbf{W}\sim \mathrm{CW}_r(\nu,\Sigma)$ for $\nu < r$ over the space of $r \times r$ positive semidefinite matrices of rank $\nu$ is derived in \citet{ratnarajah2005complex}[Theorem 3] as
\begin{equation*}
    p(\mathbf{W}|\nu,\Sigma)= \frac{\pi^{\nu(\nu-r)}}{\Gamma_\nu(\nu)}\big(\prod_{j=1}^\nu \ell_j\big)|\Sigma|^{-\nu}\exp(-\mathrm{tr}~\Sigma^{-1}\mathbf{W}),
\end{equation*}
where $\Gamma_\nu(\nu) = \pi^{\nu(\nu-1)/2}\prod_{k=1}^\nu\Gamma(\nu-k+1)$, $\ell_j$ is the $j$th eigenvalue in the reduced spectral decomposition $\v W=\v E_1 \v \Lambda \v E_1^H$, where $\v E_1$ is an $r\times \nu$ complex orthogonal matrix and $\v \Lambda = \mathrm{Diag}(\ell_1,\ldots,\ell_\nu)$. 

Consider now inference for a parametric model with parameter vector $\v \theta$ and spectral density matrix $f_{\mathbf{X},\v \theta}(\omega_k)$. The Whittle log-likelihood exploits the asymptotic result for the periodogram and is defined using the complex singular Wishart distribution as
\begin{equation}\label{eq:multiwhittle}
    \ell_\mathcal{W}(\v \theta) = - \sum_{\omega_k\in \tilde \Omega_T} \left( \log | f_{\mathbf{X},\v \theta}(\omega_k)| 
     + \mathrm{tr}\left[f_{\mathbf{X},\v \theta}(\omega_k)^{-1}I_T(\omega)\right] \right),
\end{equation}
where $\tilde \Omega_T$ is the set of Fourier frequencies with the omission of $\omega_k=0$ and $\omega_k=\pi$. The term for $\omega_k=0$ is not included when the time series is demeaned since then $J_T(0)=0$ by construction; the term for $\omega_k=\pi$ is removed for simplicity since it has a different distribution than the other frequencies and its influence is negligible asymptotically.
Note that since $f_{\mathbf{X},\v \theta}(\omega_k)$ is Hermitian for an absolutely summable stationary process \citep{brillinger2001time},
both $| f_{\mathbf{X},\v \theta}(\omega_k)|$ and
\[\mathrm{tr}\left[f_{\mathbf{X},\v \theta}(\omega_k)^{-1}I_T(\omega)\right] = I_T(\omega)^H f_{\mathbf{X},\v \theta}(\omega_k)^{-1}I_T(\omega),\] are real-valued.

\section{Subsampling Markov chain Monte Carlo}\label{sec:subsampling_mcmc}
Subsampling MCMC uses the framework of pseudo-marginal MCMC \citep{Andrieu2009}, in which an estimator of the likelihood is used within a Metropolis-Hastings algorithm. This section gives a briefly reviews subsampling MCMC; see \cite{quiroz2019speeding} and \cite{salomone2019spectral} for details on how pseudo-marginal methods are used for subsampling problems.

Let $\pi({\v \theta}) \propto L_n({\v \theta})p({\v \theta})$ denote the posterior distribution of the model parameter $\v \theta$ from a sample of $n$ observations with likelihood function $L_n({\v \theta})$, and prior distribution $p({\v \theta})$. Metropolis-Hastings MCMC algorithms sample iteratively from $\pi({\v \theta})$ by proposing a parameter vector ${\v \theta}^{(j)}$ at the $j$th iteration from the proposal distribution $g(\v\theta^{(j)} \vert \v\theta^{(j-1)})$ and accepting the draw with probability
\begin{equation}\label{eq:MHaccProb}
\min\Bigg\{1,
    \frac{L_n(\v\theta^{(j)})p(\v\theta^{(j)})}{L_n(\v\theta^{(j-1)})p(\v\theta^{(j-1)})}
    \cdot
    \frac{g(\v\theta^{(j-1)} \vert \v\theta^{(j)})}{g(\v\theta^{(j)} \vert \v\theta^{(j-1)})}
    \Bigg\}.
\end{equation}

The cost of computing the likelihood $L_n(\v \theta)$ in the acceptance probability  \eqref{eq:MHaccProb} is a major concern when $n$ is large. \citet{quiroz2019speeding} propose speeding up MCMC for large $n$ by replacing $L_n(\v \theta)$ with an estimate $\widehat L(\v \theta,\v u)$ based on a small random subsample of $m\ll n$ observations, where $\v u=(u_1,...,u_m)$ indexes the selected observations. Their algorithm samples $\v\theta$ and $\v u$ jointly from an extended target distribution $\tilde \pi (\v\theta,\v u) \propto \widehat L(\v \theta,\v u)p(\v \theta)p(\v u)$. \citet{Andrieu2009} show that such pseudo-marginal MCMC algorithms sample from the full-data posterior $\pi({\v \theta})$ if $\widehat L(\v \theta,\v u)$ is an unbiased and almost surely positive estimator of $L_n(\v \theta)$, where the unbiasedness condition is $\mathrm{E}_{\v u}\widehat L(\v \theta,\v u) = \int \widehat L(\v \theta,\v u)p(\v u)d\v u = L_n(\v \theta)$. \citet{quiroz2019speeding} use an unbiased estimator of the log-likelihood $\widehat \ell(\v\theta,\v u)$ and then debias $\exp(\widehat \ell(\v\theta,\v u))$ to estimate the full-data likelihood. This debiasing approach does not remove all bias and the marginal distribution of $\v \theta$ from their pseudo-marginal sampler is the slightly perturbed posterior
\begin{align}
    \label{eq:perturbed_posterior}
    \eta(\v \theta) & = \frac{\left(\int \widehat{L}(\v\theta, \v u)p(\v u) d\v u\right) p(\v \theta)}{\int \left(\int \widehat{L}(\v\theta, \v u)p(\v u) d\v u\right) p(\v \theta) d\v \theta}.
\end{align}
The perturbation error of $\eta(\v \theta)$ in \eqref{eq:perturbed_posterior} is within $O(n^{-1}m^{-2})$ distance in total variation norm of the true posterior $\pi(\v \theta)$, and the applications in \citet{quiroz2019speeding, quiroz2020block} and  \citet{dang2019hamiltonian} show negligible bias. \cite{salomone2019spectral} show that these results extend to subsampling for the Whittle likelihood, with the true posterior in this case based on the Whittle likelihood for the full data. 

To apply subsampling MCMC, the log-likelihood needs to decompose as a sum $\ell(\v \theta) = \sum_{k=1}^n \ell_k(\v\theta)$; either by assuming independent data or by using the Whittle likelihood in the frequency domain for temporally dependent data as in \eqref{eq:multiwhittle}. Estimating the log-likelihood is analogous to the problem of estimating a population total in survey sampling \citep{quiroz2018sankhya}. 

It is by now well-known that subsampling MCMC requires a likelihood estimator with small variance, otherwise the sampler tends to get stuck \citep{quiroz2019speeding,quiroz2018sankhya}. \cite{quiroz2019speeding} propose using the \textit{difference estimator} 
\begin{equation*}
 \widehat \ell(\v\theta,\v u) = q(\v \theta) + \frac{n}{m}\sum_{i=1}^m \left( \ell_{u_i}(\v \theta) - q_{u_i}(\v \theta)\right), \quad \text{with } q(\v \theta) = \sum_{k=1}^n q_k(\v\theta),
\end{equation*}
with control variates $q_k(\v\theta)$ to reduce the variance. The second term in the first equation is an unbiased estimate of $\ell(\v \theta)-q(\v \theta)$ and is efficient if $q_k(\v \theta) \approx \ell_k(\v \theta)$ for $k=1, \dots, n$. The control variates homogenise the log-density terms in the estimator so that the observations can be sampled by simple random sampling \citep{quiroz2019speeding}. \cite{bardenet2017markov} propose setting $q_k(\v \theta)$ equal to a second order Taylor expansion around $\v \theta^\star$ (e.g.\ the posterior mode). An important property of this control variate is that the $q(\v \theta)$ term can be computed in $O(1)$ time. 

\begin{figure}
    \includegraphics[width=0.55\linewidth]{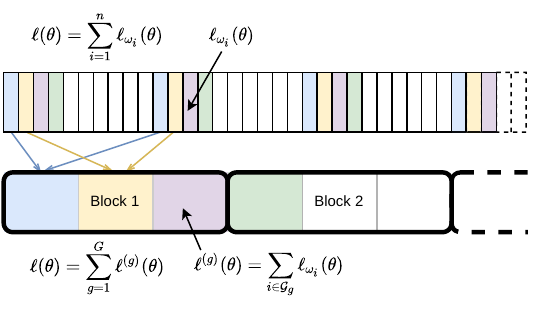}
    \caption{Explaining grouping and blocking in spectral subsampling MCMC. The periodogram observations and their corresponding log-density terms $\ell_{\omega_i}(\theta)$ (top row) are systematically divided into groups (indicated by colors) and summed $\ell^{(g)}(\theta)$ (bottom row, $\mathcal{G}_g$ denotes the indices belonging to group $g$), so that each group spans over the whole frequency domain. The groups are then divided into blocks and only the group subsampling indicators within a subset of the blocks are updated at each MCMC iteration.}\label{fig:ExplainGroupAndBlock}
\end{figure}

The control variate in \cite{bardenet2017markov} works well when the log-density $\ell_{k}(\v \theta)$ of each observation is approximately quadratic in the neighbourhood of $\v \theta^\star$ explored by the MCMC. There is, however, no guarantee that the individual log-densities $\ell_{k}(\v \theta)$ are close to quadratic in complex models, particularly in large parameter spaces where the posterior may not be highly concentrated around $\v \theta^\star$. \cite{salomone2019spectral} therefore propose a grouped quadratic control variate, in which observations are divided into $G$ groups and the log-likelihood contribution for each group is approximated by a quadratic function. The idea is that the Bernstein-von Mises theorem (asymptotic normality of the posterior) suggests an approximately quadratic log-likelihood for the
group, given that the number of observations in the group is large enough. $G$ should be chosen such that there are enough observations in each group for the approximation to be accurate. Note that the sampling units in this approach are the groups, and hence there are $G$ group log-likelihoods to subsample.  For the Whittle likelihood, the order of the dataset corresponds to the order of the frequencies in the DFT (from low to high). The groups should then be formed via systematic sampling, so that each group contains periodogram ordinates over the whole frequency spectrum; Figure \ref{fig:ExplainGroupAndBlock} graphically illustrates this and the block setup which is now described.

\begin{algorithm}
\SetAlgoLined
\KwInput{data \textbf{y}, unbiased log-likelihood estimator $\widehat{\ell}(\v\theta,\v u)$ based on Taylor control variates, variance estimator $\widehat{\sigma}^2_{\widehat{\ell}}(\v\theta,\v u)$ of $\widehat{\ell}(\v\theta,\v u)$, prior density $p(\boldsymbol{\theta})$, initial value $\boldsymbol{\theta}^{(0)}$, initial subsample $\v u^{(0)}=(\v u_1^{(0)}, \dots, \v u_K^{(0)})$ divided into $K$ blocks, proposal covariance $\Omega$ for $\v \theta$.} 
\BlankLine
define $\hat p(\mathbf{y} | \boldsymbol{\theta}, \v u) = \exp\left(\widehat{\ell}(\v\theta,\v u) - \frac{1}{2} \widehat{\sigma}^2_{\widehat{\ell}}(\v\theta,\v u) \right)$ \\
\For{$j = 1$ \KwTo $N$}{
    \BlankLine
    \textcolor{blue}{\textbackslash\textbackslash\hspace{0.1cm} propose the subsample $\v u$}  \\
    sample $k$ uniformly from $\{1, \dots, K\}$, generate $\v u_k^{\prime} \sim p(\v u_k)$ and propose $$\v u^\prime=(\v u_1^{(j-1)}, \dots, \v u_k^{\prime}, \dots, \v u_K^{(j-1)})$$ \\
    \textcolor{blue}{\textbackslash\textbackslash\hspace{0.1cm}propose the parameter $\v\theta$} \\
    generate $\v\theta^{\prime} \sim N(\v\theta^{(j-1)}, \Omega)$ \\
    set $\v u^{(j)}, \v\theta^{(j)}   \leftarrow \v u^{\prime}, \v\theta^{\prime}$ with probability \\
    \hspace{2cm}$\alpha= \min \Big( 1, \frac{\hat p(\mathbf{y} | \boldsymbol{\theta}^{\prime}, \v u^{\prime})p(\v\theta^{\prime})}{\hat p(\mathbf{y} | \boldsymbol{\theta}^{(j-1)}, \v u^{(j-1)})p(\v\theta^{(j-1)})}   \Big)$ \\
    else set $\v u^{(j)}, \v\theta^{(j)} \leftarrow \v u^{(j-1)}, \v\theta^{(j-1)}$
} 
\BlankLine
\KwOutput{autocorrelated random draws $\boldsymbol{\theta}^{(1)},\ldots,\boldsymbol{\theta}^{(N)}$ from $\eta(\v \theta)$ in \eqref{eq:perturbed_posterior}.}
\BlankLine
\caption{Subsampling MCMC with a random walk Metropolis proposal for $\v\theta$ and a block proposal for $\v u$. \label{alg:Subsampling_MCMC}}
\end{algorithm}

Pseudo-marginal methods can be made much more efficient by correlating the estimators used at the numerator and denominator of the Metropolis-Hastings acceptance ratio \citep{deligiannidis2018correlated, Tran2016block, quiroz2020block}. We use the  \cite{Tran2016block} approach and divide the random numbers $\v u$ (here the set of subsampled observations indices) into $K$ blocks. By updating the $\v u$ only within one of the blocks at each MCMC iteration, \cite{Tran2016block} show that the correlation between the logs of the estimators in the numerator and denominator is approximately $1-1/K$, and setting $K$ large makes the pseudo-marginal method less sensitive to the variance of the log of the likelihood estimator; see Figure \ref{fig:ExplainGroupAndBlock} again for the relation between the grouping described in the previous paragraph and the block setup. 

Algorithm \ref{alg:Subsampling_MCMC} sketches the implementation of the algorithm; see \cite{quiroz2019speeding} for details.

\section{The Vector ARTFIMA process}\label{sec:Models}

 \citet{salomone2019spectral} demonstrate that spectral subsampling MCMC can be successfully applied in univariate time series models with likelihood functions that are known to be non-Gaussian in the parameters due to local non-identification, such as ARMA and ARTFIMA. We explore the performance of spectral subsampling MCMC for multivariate versions of ARMA and ARTFIMA, which pose an even greater challenge since the number of parameters in multivariate models typically increase quadratically with the number of time series.

\subsection{Vector ARMA}\label{subsec:VARMA}
The vector autoregressive moving average VARMA($p,q$) model is
\begin{equation}
    \Phi(L)(\mathbf{Y}_t-\boldsymbol \mu) = \Theta(L)\boldsymbol{\varepsilon}_t,
\end{equation}
where $\{ \boldsymbol{\varepsilon}_t  \}_{t=1}^T$
is an iid sequence from $N(0,\Sigma_\varepsilon)$, 
$\Phi(L) = I_r - \Phi_1L - \cdots - \Phi_pL^p$ and $\Theta(L) = I_r + \Theta_1L + \cdots + \Theta_qL^q$ are the AR and MA 
lag polynomials, respectively. We assume that the usual conditions for stationarity and invertibility of the VARMA process hold:
\begin{assumption}\label{ass:VARMAroots}
    The matrix polynomials $\Phi(z)$ and $\Theta(z)$ share no common zeros and $|\Phi(z)|\neq 0$ and $|\Theta(z)| \neq 0$ for $|z| \leq 1$.
\end{assumption}
  
The spectral density matrix of the VARMA($p,q$) model is \citep[Ch. 11]{brockwell1991time}
\begin{equation}
    f_{\mathbf{Y}}(\omega) = \frac{1}{2\pi}\Phi^{-1}(e^{-i\omega})\Theta(e^{-i\omega}) \Sigma_{\varepsilon} \Theta^H(e^{-i\omega})  \Phi^{-H}(e^{-i\omega}),
\end{equation}
where $\v A^{-H}$ denotes the inverse of the conjugate transpose of the matrix $\v A$.

The Whittle likelihood is valid if the process is  stationary, i.e.\ if all roots of $\Phi(z)$ are outside of the unit circle. We therefore use the reparametrisation in \cite{ansley1986note} to map a set of unconstrained real-valued AR coefficient matrices to the set of stationary parameters. The same reparametrisation is also used on the MA parameter matrices to ensure invertibility. These reparametrisations are an additional source of non-linearity/non-Gaussianity; even a plain vector AR (VAR) model is no longer linear in the parameters.

\subsection{Vector ARTFIMA}\label{subsec:ARTFIMA}

\citet{Sabzikar2019} define the univariate $\mathrm{ARTFIMA}(p,d,\lambda,q)$ process for $Y_t$ as
\begin{equation}\label{eq:artfima}
    \phi(L)\Delta^{d,\lambda}(Y_{t}-\mu)=\vartheta(L)\varepsilon_{t},
\end{equation}
where $\{\varepsilon_t\}_{t \in \mathbb{Z}}$ is an iid sequence of zero mean random variables with variance $\sigma_{\varepsilon}^2$, 
$\phi(L)\equiv 1-\phi_{1}L-\cdots-\phi_{p}L^{q}$, and $\vartheta(L)\equiv 1+\vartheta_{1}L+\cdots+\vartheta_{q}L^{p}$,
are the autoregressive and moving average lag polynomials, and $L$ is the lag operator, i.e.\ $L^k Y_t \equiv Y_{t-k}$. 
The \emph{tempered fractional differencing operator} $\Delta^{d,\lambda}$, where $d\notin\mathbb{Z}$ is the \emph{fractional differencing} parameter and $\lambda \geq 0$ is the \emph{tempering} parameter, is defined by the generalised binomial theorem as
\begin{equation}
     \Delta^{d,\lambda}Y_t \equiv (1-e^{-\lambda}L)^d Y_t  =\sum_{j=0}^\infty b_j^{d,\lambda}Y_{t-j},   
\end{equation}
where 
\begin{equation*}
    b_j^{d,\lambda} \equiv (-1)^j \binom{d}{j}e^{-\lambda j} \text{ and } \binom{d}{j} = \frac{\Gamma(1+d)}{\Gamma(1+d-j)j!}.
\end{equation*}
We follow the convention in time series of not explicitly writing out the lag operator $L$ in differencing operators unless needed for clarity, i.e.,\ $\Delta^{d,\lambda}\equiv \Delta^{d,\lambda}(L)$.

To explain the role of the parameters $d$ and $\lambda$, note that for $\lambda=0$ and $d$ a non-negative integer, 
$\Delta^{d,\lambda}Y_t$ reduces to simple differencing of order $d$ and the ARTFIMA model in \eqref{eq:artfima} reduces to the autoregressive integrated moving average 
(ARIMA) process. For $\lambda=0$ and fractional $d$ we obtain the autoregressive fractionally integrated moving average (ARFIMA) model  \citep{Granger1980}. The ARFIMA process is stationary and invertible for $-0.5 < d < 0.5$, and has long-range or long-memory dependence with an autocovariance function dying off so slowly that it is not absolutely summable. The tempering parameter $\lambda>0$ in ARTFIMA allows for semi-long range dependence, i.e.\ ARFIMA-like long range dependence for a number of lags beyond which the autocovariances decay exponentially fast. \cite{Sabzikar2019} prove that the ARTFIMA process is stationary for any $d\notin\mathbb{Z}$ if $\lambda>0$, provided that the univariate case of Assumption \ref{ass:VARMAroots} holds.  

The univariate ARTFIMA process is now extended to the multivariate case, where $\mathbf{Y}_{t}$ is an $r$-dimensional vector-valued time series. We define the vector ARTFIMA (VARTFIMA) process as
\begin{equation}\label{eq:vartfima}
    \Phi(L)\Delta^{\mathbf{d},\v \lambda}(\mathbf{Y}_{t}-\v \mu)=\Theta(L) \v \varepsilon_{t},
\end{equation} 
where $\Delta^{\mathbf{d},\v \lambda}$ is the multivariate tempered fractional differencing operator defined by
\begin{equation}
    \Delta^{\mathbf{d},\v \lambda}\mathbf{Y}_{t} \equiv
    \Big(\Delta^{d_1,\v \lambda_1} Y_{1,t},\ldots, \Delta^{d_r,\v \lambda_r} Y_{r,t} \Big)^\top,
\end{equation}
hence allowing for different fractional differences and temporal differencing for the $r$ time series.

Theorem \ref{thm:stationary} generalises the spectral density result in \cite{Sabzikar2019} to the vector case. Its proof is in Appendix \ref{app:proofs}.
\begin{theorem}\label{thm:stationary}
    The multivariate ARTFIMA process is causal and stationary for all $d\notin \mathbb{Z}$ and all $\lambda>0$ if $|\Phi(z)|\neq 0$ for $|z| \leq 1$. The spectral density matrix is 
    \begin{equation*}
        f_{\mathbf{Y}}(\omega) = \frac{1}{2\pi} \Delta^{\mathbf{-d},\v \lambda}(e^{-i\omega}) \Phi^{-1}(e^{-i\omega})\Theta(e^{-i\omega}) \Sigma_{\varepsilon} \Theta^H(e^{-i\omega})  \Phi^{-H}(e^{-i\omega})  \Delta^{\mathbf{-d},\v \lambda}(e^{-i\omega})^H,
    \end{equation*}
    where $\Delta^{\mathbf{-d},\v \lambda}(z) = \mathrm{Diag}\big((1-e^{-\lambda_1}z)^{-d_1},\ldots,(1-e^{-\lambda_r }z)^{-d_r}\big)$, and $\v A^{-H}$ the inverse of the conjugate transpose of the matrix $\v A$.
\end{theorem}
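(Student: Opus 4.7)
The plan is to invert the VARTFIMA operator equation explicitly and read off stationarity, causality, and the spectral density from the resulting linear filter representation. Because $\Delta^{\v d,\v \lambda}(L)$ is defined componentwise, it is the diagonal matrix of scalar lag operators whose $j$-th diagonal entry is $(1-e^{-\lambda_j}L)^{d_j}$. My first step is therefore to show that each scalar operator is invertible as a power series in $L$ with absolutely summable coefficients whenever $\lambda_j>0$. This follows from the generalised binomial expansion together with the standard asymptotic $(-1)^n\binom{-d}{n}\sim n^{d-1}/\Gamma(d)$: the polynomial prefactor is dominated by the exponential $e^{-\lambda_j n}$, so $\sum_n |b_n^{-d_j,\lambda_j}|<\infty$ for every $d_j\notin\mathbb Z$. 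In particular $\Delta^{-\v d,\v \lambda}(L)$ exists as a diagonal matrix-valued lag operator with absolutely summable coefficient sequence, with no constraint on the $d_j$.

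Under the assumption $|\Phi(z)|\neq 0$ for $|z|\leq 1$, standard VARMA theory (see e.g.\ \citet{brockwell1991time}, Ch.\ 11) gives that $\Phi^{-1}(L)$ exists as an absolutely summable matrix-valued power series in $L$. Applying $\Delta^{-\v d,\v \lambda}(L)\Phi^{-1}(L)$ to both sides of \eqref{eq:vartfima} yields the MA$(\infty)$ representation
\begin{equation*}
\mathbf{Y}_t-\v \mu=\Psi(L)\v \varepsilon_t,\qquad \Psi(z)=\Delta^{-\v d,\v \lambda}(z)\Phi^{-1}(z)\Theta(z),
\end{equation*}
where absolute summability is preserved under convolution of matrix power series. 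Since $\Psi(L)$ involves only non-negative powers of $L$, the process is causal; absolute summability of the coefficient matrices together with finiteness of $\Sigma_\varepsilon$ yields existence in mean square and covariance stationarity.

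For the spectral density matrix I would invoke the standard result that if $\mathbf{Y}_t-\v \mu=\Psi(L)\v \varepsilon_t$ for white noise $\v\varepsilon_t$ with covariance $\Sigma_\varepsilon$ and absolutely summable $\Psi$, then
\begin{equation*}
f_{\mathbf{Y}}(\omega)=\tfrac{1}{2\pi}\,\Psi(e^{-i\omega})\,\Sigma_\varepsilon\,\Psi(e^{-i\omega})^H.
\end{equation*}
Substituting $\Psi(z)=\Delta^{-\v d,\v \lambda}(z)\Phi^{-1}(z)\Theta(z)$ and using that the conjugate transpose reverses matrix products puts the diagonal factor $\Delta^{-\v d,\v \lambda}(e^{-i\omega})$ on the left and its Hermitian conjugate on the right, delivering the stated expression.

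The main obstacle is the very first step: establishing absolute summability of the tempered fractional filter for non-integer $d$, a regime in which the untempered case $\lambda=0$ famously fails to be absolutely summable and one must resort to square-summable coefficients to construct ARFIMA. The exponential tempering $e^{-\lambda j}$ is precisely what tames the binomial coefficients, and handling this asymptotic cleanly (including the reduction from the multivariate diagonal operator to the univariate bound already used by \citet{Sabzikar2019}) is where the substantive work sits. All remaining steps are routine manipulations of matrix power series combined with the classical spectral representation of a linear process.
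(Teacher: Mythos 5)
Your proposal is correct and follows essentially the same route as the paper's proof: invert the diagonal tempered-differencing operator and $\Phi(L)$, verify absolute summability of the convolved coefficient sequences to obtain the causal MA$(\infty)$ representation, and then apply the standard linear-filter formula for the spectral density matrix. The only cosmetic difference is that you sketch the absolute summability of the tempered binomial coefficients directly via the asymptotic $(-1)^n\binom{-d}{n}\sim n^{d-1}/\Gamma(d)$, whereas the paper simply cites this step from \citet{Sabzikar2019}.
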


\section{Applications}\label{sec:Applications}
We now illustrate the proposed subsampling MCMC methodology on three long multivariate time series datasets of varying dimensions. 

\subsection{Datasets}\label{subsec:Data}

The first dataset contains observations on mean water velocity from two measurement stations at St Clair River and Detroit River, located on opposite sides of Lake St Clair forming a connecting channel between Lake Huron and Lake Erie. The measurements are recorded every $12$th minute. The final dataset contains $\numprint{130,001}$ observations from Jan 3, 2016 at 00:00 hours until Dec 21, 2018 at 08:00 hours for each of the two locations. Missing observations are imputed using the \texttt{na.interp} function in the R package \texttt{forecast}. Figure \ref{fig:WaterVelocityData} plots the data.

The second dataset contains $\numprint{124,879}$ observations of Swedish temperatures at each of three airport locations (Arlanda, Bromma and Landvetter), giving a three-dimensional time series. The data are measured in an hourly scale for the time period February 1, 2008 until May 1, 2022, and are processed as follows. Each univariate series is preprocessed separately: missing observations are imputed with the value at a nearby location if available or otherwise imputed with the \texttt{na.interp} function in the R package \texttt{forecast}. Trend and seasonal components are then removed using the \texttt{mstl} function in the R package \texttt{stats} removing both a daily and an annual seasonal cycle. Figure \ref{fig:SwedTempData} plots the processed data. The last $878$ observations are used as a test set for evaluating prediction performance in Section \ref{sec:univariate} and are therefore not used for the posterior inference. The raw data before pre-processing are plotted in Figure \ref{fig:SwedTempRawData} in Appendix \ref{app:additionalresults}.

\begin{figure}
    \includegraphics[width=0.40\linewidth]{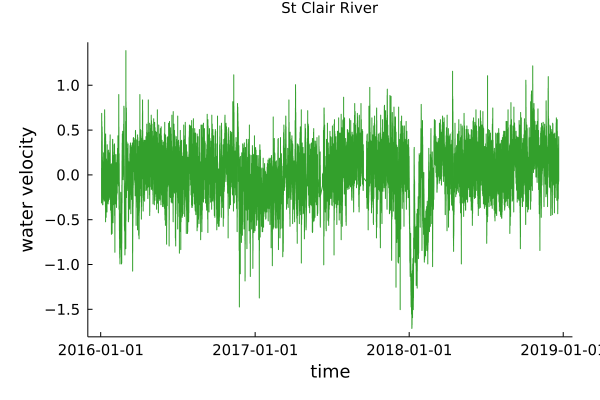}
    \includegraphics[width=0.40\linewidth]{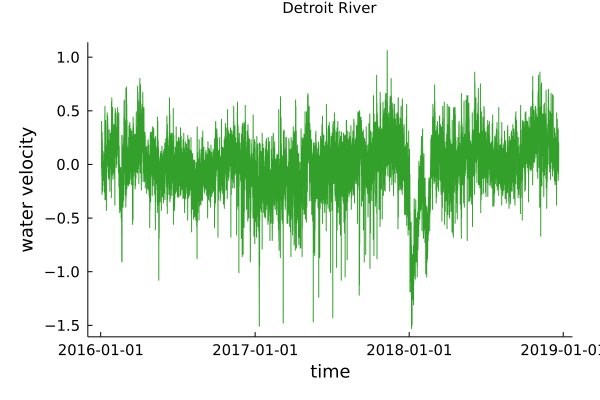}
    \caption{Water velocity data.}\label{fig:WaterVelocityData}
\end{figure}

\begin{figure}
    \includegraphics[width=0.32\linewidth]{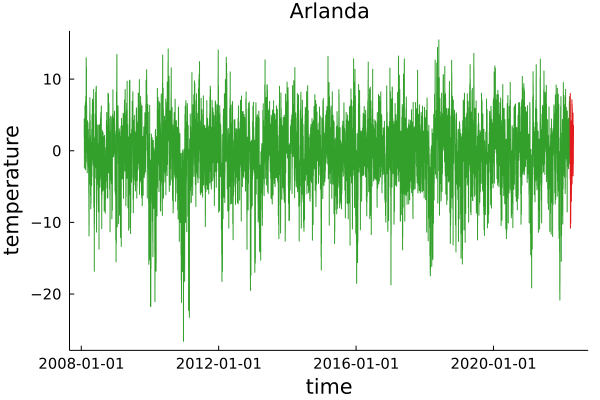}
    \includegraphics[width=0.32\linewidth]{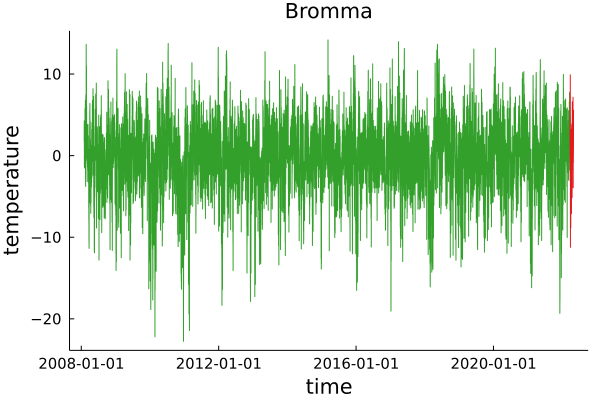}
    \includegraphics[width=0.32\linewidth]{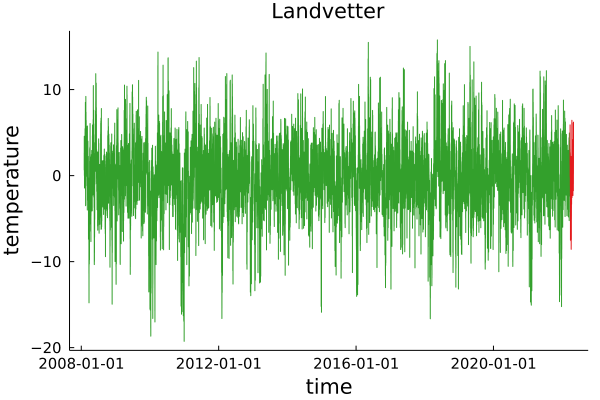}
    \caption{Swedish temperature data. The final $878$ time points in red are used for testing the forecast accuracy of the model.}\label{fig:SwedTempData}
\end{figure}

\begin{figure}
    \includegraphics[width=0.35\linewidth]{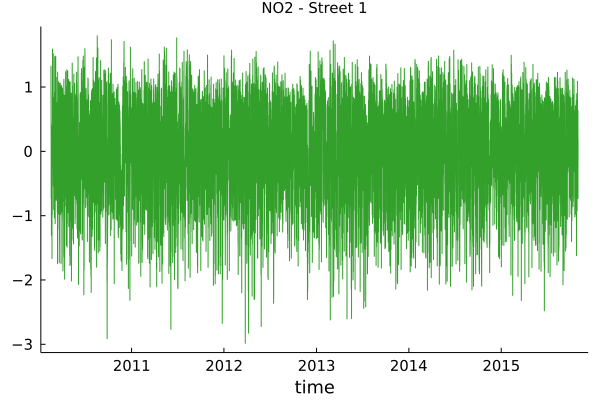}
    \includegraphics[width=0.35\linewidth]{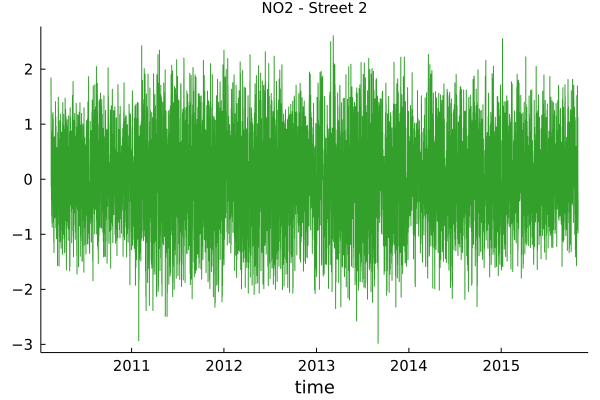} \\
    \includegraphics[width=0.35\linewidth]{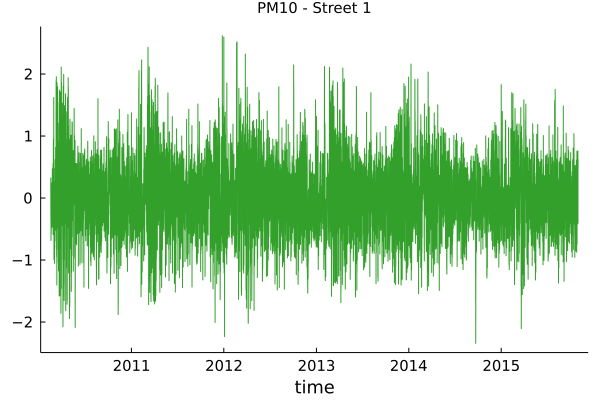}
    \includegraphics[width=0.35\linewidth]{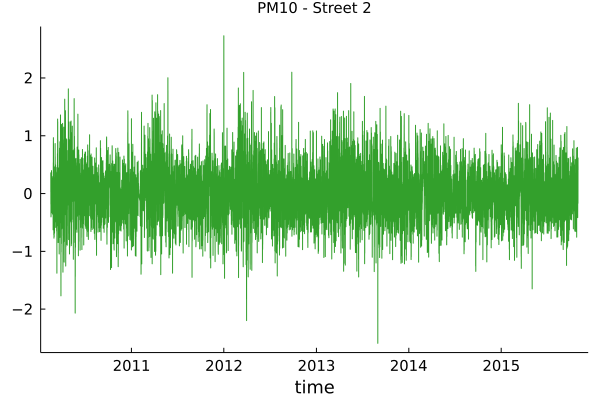}
    \caption{Stockholm air pollution data.}\label{fig:SthlmPollutionData}
\end{figure}

The third dataset contains $\numprint{50,001}$ observations of nitrogen dioxide (NO2) and particulate matter (PM10) pollution at two streets in central Stockholm, giving a four-dimensional time series. The data are measured hourly for the time period February 16, 2010 until October 31, 2015. The missing values are imputed with the \texttt{na.interp} function for seasonal data in R. The series are analyzed in logs through the transformation $\log(x_t-\min \{x_t\}_{t=1}^T + 1 )$ to make the series more normally distributed; we subtract the minimum value and add one to ensure that the series are positive before taking logs (the two PM10 series have a few negative values and the N02 series have some values that are very close to zero). The transformed series are finally filtered through an estimated seasonal $\mathrm{ARIMA}(0,0,0)(1,0,1)_{24}$ to remove the natural 24 hour cyclic trend. The autocorrelation functions of the residuals from this particular seasonal ARIMA model indicate stationarity for all four filtered series. Figure \ref{fig:SthlmPollutionData} plots the transformed and filtered data. The raw data are plotted in Figure \ref{fig:SthlmPollutionRawData} in Appendix \ref{app:additionalresults}.

\subsection{Algorithm, model and prior settings}
The grouped control variates in Section \ref{sec:subsampling_mcmc} are used by dividing the observations into $G=\numprint{1000}$ groups, which gives accurate enough control variates in most of our estimated models. See \cite{salomone2019spectral} for a demonstration on how varying $G$ affects the variance of the estimator. Figure \ref{fig:ExplainGroupAndBlock} illustrates how the $g$th group is chosen by including the $g$th lowest frequency and subsequently systematically sampling every $G$th frequency, ensuring that each group contains periodogram ordinates over the whole frequency spectrum.  We use $10$ randomly selected groups ($1$\% of the data) to estimate the full-data likelihood in each MCMC iteration. For the block pseudo-marginal, we divide the random numbers into $K=10$ blocks (one group per block), resulting in a correlation between the log of the estimators of approximately $0.9$ as discussed in Section \ref{sec:subsampling_mcmc}. For all examples, we sample $\numprint{55000}$ draws from the posterior distribution and discard $\numprint{5000}$ draws as burn-in.

For the ARTFIMA models, we allow for different differencing parameters for each time series. The tempering parameters are restricted to be equal for each time series as the more general model with different $\lambda$ gave estimates of $\lambda$ that are very close across all series in a given dataset.

We use a Minnesota-style prior \citep{doan1984forecasting} for the autoregressive and moving average coefficients, which is a normal prior with diagonal covariance matrix with elements
$$v_{ij,l}=\begin{cases} (\lambda_0/l)^2, &\mbox{if } i = j \\
(\lambda_0 \theta_0\sigma_i/l\sigma_j)^2 & \mbox{if } i \neq j. \end{cases} 
$$
All prior means are set to zero. The hyper-parameter $\lambda_0$ controls how tightly the coefficient of the first lag is concentrated around 0;  we set $\lambda_0=1$. Note that the prior variance decreases with increasing lag length $l$, allowing more shrinkage of coefficients corresponding to lags further back in time. The hyper-parameter $\theta_0$ accounts for the belief that most of the variation in each variable is accounted for by its own lags and is hence set to a value in the range $[0,1)$. We set $\theta_0=0.2$. The ratio $\sigma^2_i/\sigma^2_j$ accounts for the difference in the variability of the variables. Following standard practice, $\sigma^2_i$ is set to the residual variance computed by fitting a univariate AR model to the i$th$ series. Note, however, that we use the above Minnesota prior on the unrestricted parameters in the \citet{ansley1986note} reparametrization so the prior is acting on (a rescaled version of) the partial autocorrelation matrices instead of the original AR and MA coefficients.
We parameterise the covariance matrix $\Sigma_{\varepsilon}$ as a Cholesky factor with a logarithm transform on the diagonals and assign independent $N(0,0.1)$ priors for all elements in this parameterisation. We also use a log-transformation for the single $\lambda$ and assign $N(0,0.1)$. Finally, we assign independent priors $d_k \sim N(0,1)$, for $k=1,\dots, r$. We have verified that our results are robust to the choice of prior.

\subsection{Model comparison and fit}

We perform model selection using the BIC approximation of the log marginal likelihood \citep{kass1995bayes}
\begin{equation*}
    \log p_{\mathrm{BIC}}(\v Y) =  \log p(\v Y|\widehat{\v\theta}) - \frac{k \log n}{2}, 
\end{equation*}
where $k$ is the number of estimated parameters, $n$ is the length of the time series and $\widehat{\v\theta}$ is the maximum likelihood estimate obtained by numerical optimisation. We fit all combinations of models for $p+q \leq 2$ and Table \ref{table:BIC} shows the log marginal likelihood for all models considered for the three datasets introduced in Section \ref{subsec:Data}. The results show that models with tempered fractional differencing give a better model fit for all datasets and AR/MA orders. The two pure MA models, $\mathrm{VARMA}(0,1)$ and $\mathrm{VARMA}(0,2)$, perform very poorly on the temperature data, but improve dramatically when tempered fractional differencing is added. According to the BIC approximation of the marginal likelihood, we conclude the following: 
\begin{itemize}
    \item  $\mathrm{VARTFIMA}(0,2)$ ($14$ parameters) is best for the water velocity dataset.
    \item $\mathrm{VARTFIMA}(2,0)$ ($28$ parameters) is best for the temperature dataset. 
    \item  $\mathrm{VARTFIMA}(1,1)$ and $\mathrm{VARTFIMA}(2,0)$ are best for the pollution dataset. Both models have $47$ parameters.
\end{itemize}

\begin{table}[]
\begin{tabular}{ccrrrcrrrrr}
\hline 
 &  &  & \multicolumn{2}{c}{\vspace{-0.25cm}
} &  & \multicolumn{2}{c}{} &  & \multicolumn{2}{c}{}\tabularnewline
 &  &  & \multicolumn{2}{c}{Water Velocity} &  & \multicolumn{2}{c}{Temperature} &  & \multicolumn{2}{c}{Pollution}\tabularnewline
\cline{4-5} \cline{5-5} \cline{7-8} \cline{8-8} \cline{10-11} \cline{11-11} 
\multicolumn{1}{c}{AR} & \multicolumn{1}{c}{MA} &  & \multicolumn{1}{c}{No TFI} & \multicolumn{1}{c}{TFI} &  & \multicolumn{1}{c}{No TFI} & \multicolumn{1}{c}{TFI} &  & \multicolumn{1}{c}{No TFI} & \multicolumn{1}{c}{TFI}\tabularnewline
1 & 0 &  & $737079$ & $759123$ &  & $327097$ & $334122$ &  & $363760$ & $366022$\tabularnewline
0 & 1 &  & $588297$ & $759457$ &  & $61320$ & $332888$ &  & $306068$ & $365658$\tabularnewline
2 & 0 &  & $749650$ & $761200$ &  & $335201$ & $\textbf{335757}$ &  & $365522$ & $\textbf{366266}$\tabularnewline
0 & 2 &  & $621765$ & $\textbf{761786}$ &  & $93256$ & $333948$ &  & $325717$ & $366142$\tabularnewline
1 & 1 &  & $758838$ & $761305$ &  & $333582$ & $335647$ &  & $365762$ & $\textbf{366267}$\tabularnewline
 &  &  & \vspace{-0.35cm}
 &  &  &  &  &  &  & \tabularnewline
\hline 
\end{tabular} 
\vspace{0.1cm}
\caption{BIC approximation of the log marginal likelihood for different models for each of the three datasets in Section \ref{subsec:Data}. A higher value indicates a better model fit.  The AR and MA columns indicate the lag order in the AR and MA component. The No TFI and TFI columns indicate if the process has tempered fractional differencing. The model with the largest marginal likelihood for each dataset is marked in bold font. Both the VARTFIMA(2,0) and VARTFIMA(1,1) are in bold font for the Pollution data since the difference between them is `not worth more than a bare mention' on the modified Jeffreys' scale of evidence in \cite{kass1995bayes}.}
\label{table:BIC}
\end{table}

We implemented spectral subsampling MCMC successfully in all cases except for 
\begin{itemize}
    \item Water velocity: $\mathrm{VARMA}$(0,2).
    \item Temperature: $\mathrm{VARMA}$(0,1), $\mathrm{VARMA}$(0,2) and $\mathrm{VARTFIMA}$(0,2).
    \item Pollution: $\mathrm{VARMA}$(0,2), $\mathrm{VARMA}$(1,1), $\mathrm{VARTFIMA}$(2,0), $\mathrm{VARTFIMA}$(1,1).
\end{itemize}

The reason for the occasional failure of spectral subsampling MCMC for some models is that the control variates do not reduce the variance of the likelihood estimator sufficiently, even after grouping. The situation improves for longer time series, since the control variates typically improve with more data \citep{quiroz2019speeding}. To illustrate this, we consider the $\mathrm{VARTFIMA}$(0,2) model for the Swedish temperature data, where spectral subsampling was successful. The top left graph in Figure \ref{fig:SwedTempVARTFIMA20fail} shows the standard deviation of the log-likelihood estimator over the MCMC iterations, which is the key quantity determining the efficiency of pseudo-marginal algorithms \citep{pitt2012some}; the optimal standard deviation from Lemma S8 in \citet{quiroz2020block} is also indicated in the graph. The variability of the estimator does not exceed the optimal value and spectral subsampling works well, as exemplified by the good mixing of the MCMC chain in the bottom left graph of Figure \ref{fig:SwedTempVARTFIMA20fail} for the AR parameter of the first series on its own first lag, $\Phi_{11}^{(1)}$. The top right graph in Figure \ref{fig:SwedTempVARTFIMA20fail} shows the same quantity, but from spectral subsampling on a posterior based on only the last $60001$ data points. The variability of the estimator is initially below the optimal value and the sampler mixes well, but the algorithm eventually moves to a region of the parameter space where the estimator is often much too variable, leading to low acceptance rates and the MCMC chain getting stuck for long spells (bottom right figure).

\begin{figure}
    \includegraphics[width=0.45\linewidth]{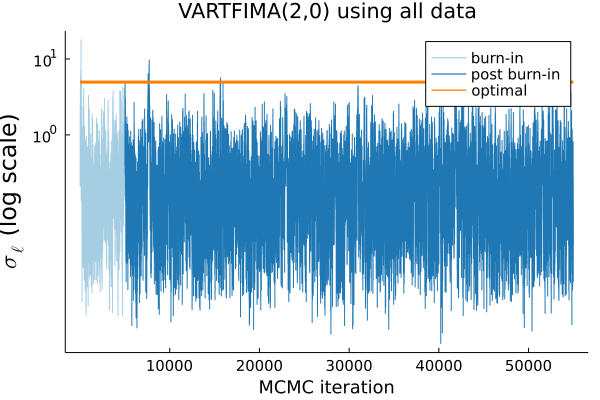}\includegraphics[width=0.45\linewidth]{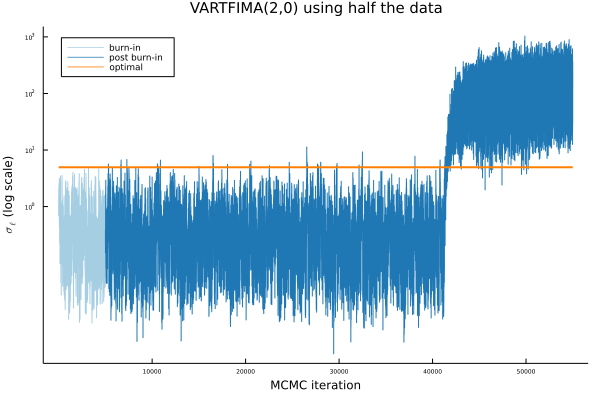}
    
    \includegraphics[width=0.45\linewidth]{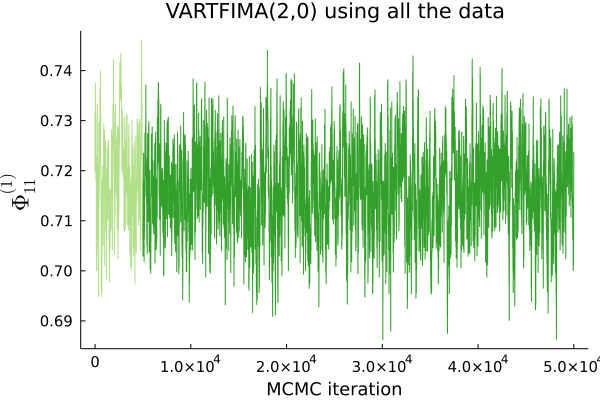}\includegraphics[width=0.45\linewidth]{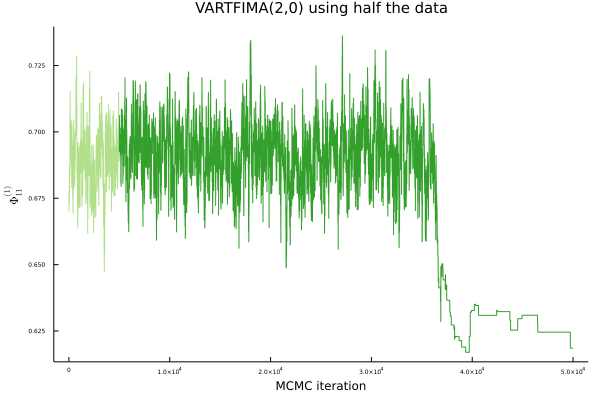}
    \caption{Standard deviation of the log-likelihood estimator over the MCMC iterations for the VARTFIMA(2,0) model fitted to the Swedish termperature dataset using all the data (top left) versus when only the last $60001$ observations was used in the estimation (top right). The optimal standard deviation \citep[Lemma S8]{quiroz2020block} is marked out with an orange line. The graphs in the bottom row of the figure show the MCMC chains for the AR parameter of the first series on its own first lag, $\Phi_{11}^{(1)}$, for the respective model.}\label{fig:SwedTempVARTFIMA20fail}
\end{figure}

Since the control variates are based on a Taylor expansion around the posterior mode, they will work poorly when the posterior has multiple modes or a long ridge of fairly constant density. Such shapes are typical in models with local non-identification. VARMA models are, for example, known to have local non-identification problems resulting from near cancellation of roots in the AR and MA polynomials \citep{chib1994bayes}, i.e.\ the likelihood function is flat in the direction of certain linear combination of the parameters. There are also potential identification issues in the MA part of VARMA models \citep[Ch. 7]{lutkepohl2013introduction}. Adding tempered fractional differencing provides more flexibility, but also additional local identification problems since the fractional differencing parameter $d$ becomes non-identified when $\lambda \rightarrow \infty$. Prior distributions alleviate these identification issues to some extent, but the posteriors in the models considered here are nevertheless very challenging. Luckily, subsampling MCMC fails on the worst fitting models in the Water velocity and Swedish temperature data, but unfortunately also gets stuck on the two best fitting models for the Pollution data. For the Pollution dataset we will therefore only present results using MCMC on the full data without subsampling, but using both the exact time domain likelihood as well as the approximate Whittle likelihood.

Figure \ref{fig:WaterVelocitySpectral} assesses the fit of the VARTFIMA(0,2) for the Water velocity data by how well the predictive distribution captures the univariate periodogram data. The predictive distribution is computed by, for each posterior draw $\v \theta$, simulating $100$ periodogram observations using the Whittle approximation
\begin{equation*}
I_{j,T}(\omega) \vert \v \theta \overset{\mathrm{indep}}{\sim} \mathrm{Expon}(f_{jj,\boldsymbol{\theta}}(\omega)),
\end{equation*}
where $I_{j,T}$ is the periodogram data for the $j$th time series. A nonparametric multitaper estimate \citep{barbour2014psd} is also shown as a grey line in the figures. The predictive distribution in Figure \ref{fig:WaterVelocitySpectral} captures the periodogram data quite well. The multitaper estimate suggest that there are some peaks in the spectral density not captured by the VARTFIMA(0,2) model. A VARTFIMA with higher AR and MA orders would be able to also fit those spectral bumps, but the peaks occur at different periods in the two series and may just be an artefact of the large number of observations in the dataset. Figures \ref{fig:SwedTempSpectral} and \ref{fig:SthlmPollutionSpectral} in Appendix \ref{app:additionalresults} present the same results for the two other datasets. The figures show that some seasonality seems to have survived the pre-processing of the datasets. An alternative is fitting a multivariate seasonal ARTFIMA model to the original data, but that is not pursued here.

\begin{figure}
    \includegraphics[width=0.45\linewidth]{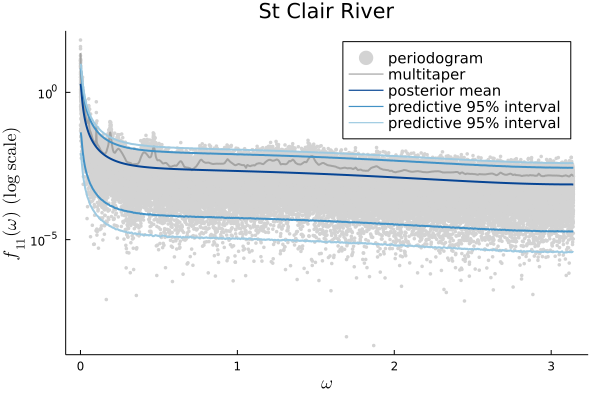}
    \includegraphics[width=0.45\linewidth]{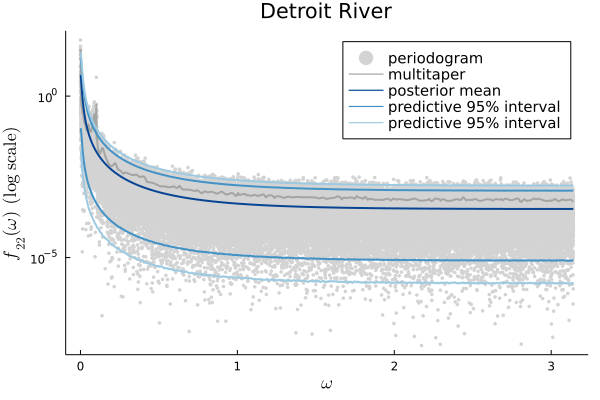}
    \caption{Posterior predictive fit of the univariate periodogram data for the VARTFIMA(1,1) model fitted to the Water velocity data. The predictive intervals are obtained by simulation from the asymptotic Whittle distribution $\mathcal{I}(\omega)\sim\mathrm{Expon}(f_{\v\theta}(\omega))$ with parameters $\v\theta$ drawn from the VARTFIMA(1,1) posterior.}\label{fig:WaterVelocitySpectral}
\end{figure}

\subsection{Efficiency of spectral subsampling MCMC}

To measure the computational advantage of subsampling we use a performance measure that takes into account both the cost of estimating the likelihood and the inefficiency of the MCMC chain. The computational time (CT) of an algorithm is defined as
\begin{equation}
    \mathrm{CT} \equiv \mathrm{IACT} \times \text{Computing time for a single iteration},
\end{equation}
where $\mathrm{IACT} \equiv 1 +2\sum_{k=1}^\infty \rho_k$ is the integrated autocorrelation time of the MCMC chain, and $\rho_k$ is the autocorrelation at lag $k$ of the posterior draws. CT measures the execution time for obtaining the equivalent of a single iid draw from the posterior. To obtain an implementation independent measure, the computing time is set proportional to the number of density evaluations in a run of the algorithm, including the evaluations needed to construct the control variate for subsampling. We use the \texttt{coda} package \citep{plummer2006coda} in \texttt{R} to estimate the integrated auto-correlation time; see \cite{quiroz2019speeding} for more details.

Table \ref{tab:Relative_CT} shows the relative computational time (RCT) of MCMC on the full datasets in relation to spectral subsampling MCMC. The RCT is defined as the ratio between the CT of full-data MCMC and that of spectral subsampling MCMC. Hence, values larger than one mean that spectral subsampling is more efficient when taking into account both the computing cost and the sampling efficiency. The results show that our subsampling algorithm is between 68-125 times faster than MCMC on the full dataset when RCT is the measure of computational efficiency. Similar speed-ups are observed on the Pollution dataset for the models where subsampling MCMC did not get stuck.
\begin{table}[]
\begin{tabular}{lcrrrcrrr}
\hline 
 &  &  & \multicolumn{2}{c}{\vspace{-0.25cm}
} &  & \multicolumn{2}{c}{} & \tabularnewline
\multicolumn{1}{l}{Dataset} & \multicolumn{1}{c}{} &  & \multicolumn{1}{c}{Model} & \multicolumn{1}{c}{} & Min & \multicolumn{1}{c}{Mean} & \multicolumn{1}{c}{Max} & \tabularnewline
\cline{1-1} \cline{4-4} \cline{6-8} \cline{7-8} \cline{8-8} 
Water velocity &  &  & VARTFIMA(0,2) &  & 87 & 98 & 125 & \tabularnewline
Sweden temperature &  &  & VARTFIMA(2,0) &  & 68 & 89 & 114 & \tabularnewline
\hline 
\end{tabular}
\caption{Relative computational time (CT) of comparing MCMC using the full dataset to spectral subsampling MCMC. The value $1$ indicates that spectral subsampling MCMC and MCMC are equally efficient, and values larger than 1 indicate that spectral subsampling MCMC is the better algorithm. The results for the pollution dataset are not shown as subsampling MCMC got stuck for the best models for that dataset (VARTFIMA$(2,0)$ and VARTFIMA$(1,1)$, see Table \ref{table:BIC}).}\label{tab:Relative_CT}
\end{table}

\subsection{Accuracy of spectral subsampling for the Whittle posterior}
This subsection explores how well spectral subsampling MCMC approximates the posterior based on the Whittle likelihood for the full data. The next section investigates how well the Whittle likelihood for the full data approximates the time domain likelihood in finite samples.

\begin{figure}
    \includegraphics[width=0.9\linewidth]{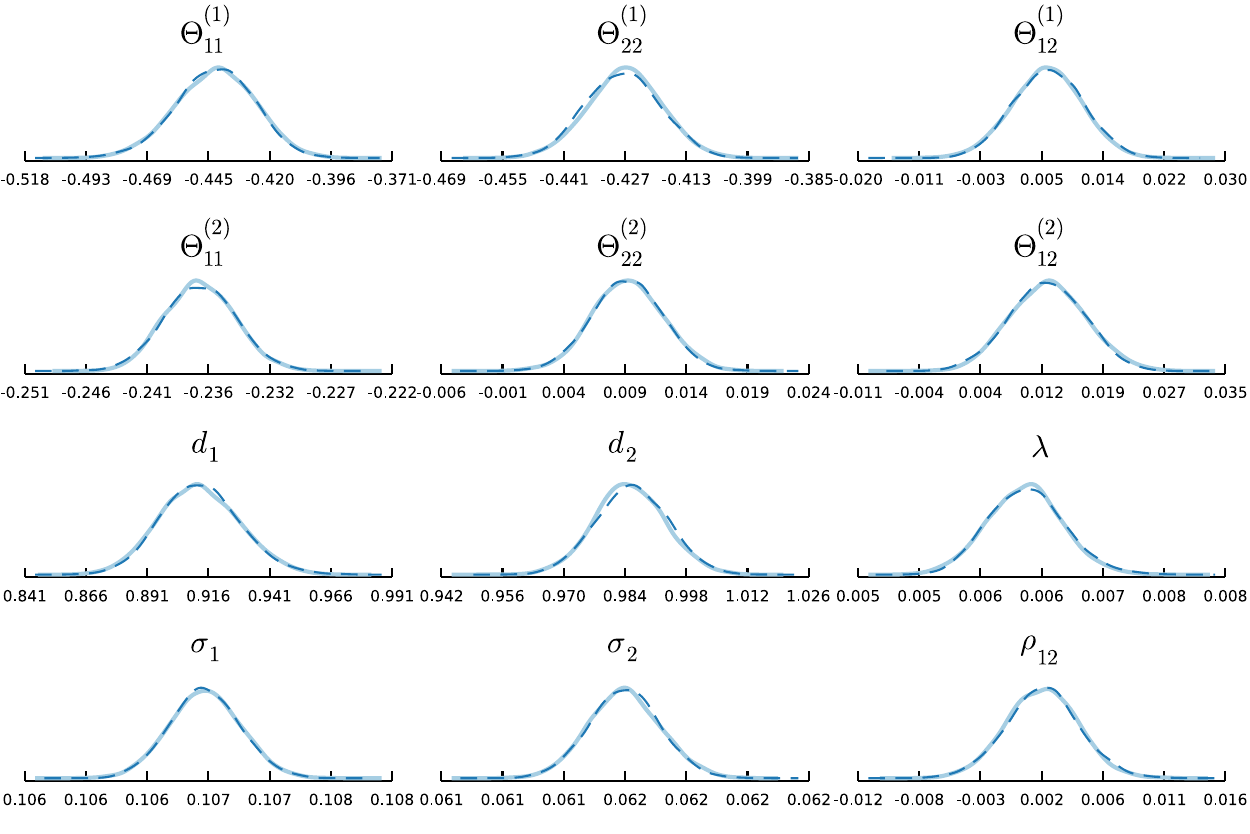}
    \caption{Kernel density estimates of a subset of the marginal posterior densities for the VARTFIMA(0,2) model fitted to the Water velocity data. The solid densities are from MCMC using the Whittle posterior on the whole dataset, and the dashed densities are obtained from spectral subsampling MCMC.}\label{fig:WaterVelocityKDEVARTFIMA02}
\end{figure}

Figure \ref{fig:WaterVelocityKDEVARTFIMA02} and \ref{fig:SwedTempKDEartfima20} compare kernel density estimates of the posterior distribution of a subset of the parameters using subsampling MCMC and full-data MCMC. We conclude that the subsampling MCMC algorithm provides very similar answers, and Table \ref{tab:Relative_CT} shows it is up to two orders of magnitude faster.

\begin{figure}
    \includegraphics[width=0.75\linewidth]{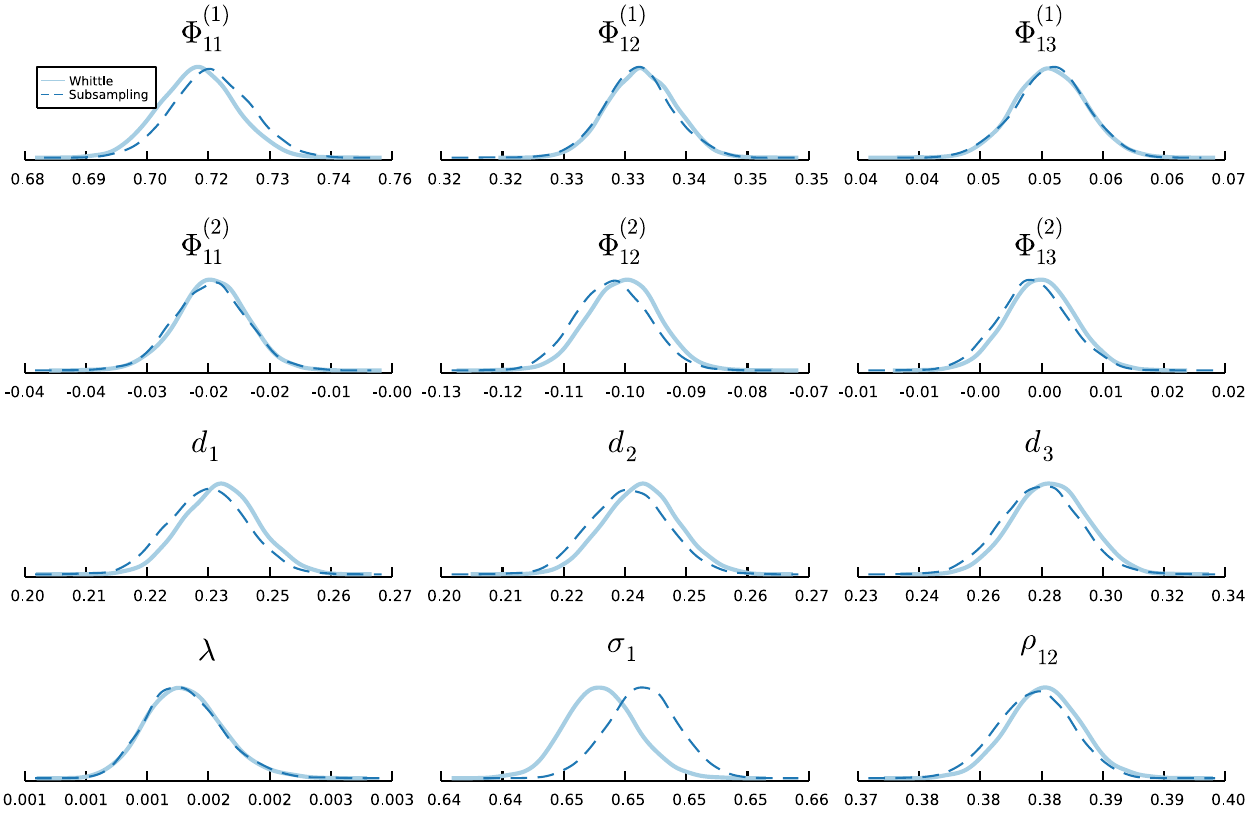}
    \caption{Kernel density estimates of a subset of the marginal posterior densities for the VARTFIMA(2,0) model fitted to the Swedish temperature data. The solid densities are from MCMC using the Whittle posterior on the whole dataset, and the dashed densities are obtained from spectral subsampling MCMC.}\label{fig:SwedTempKDEartfima20}
\end{figure}


The individual parameters in time series models are seldom of practical interest. We will therefore now explore how well our method approximates the posterior of summary quantities based on the spectral density matrix that are often used in practical work. The \emph{coherence} between two time series $X_{it}$  and $X_{jt}$ at frequency $\omega$ is defined as
\begin{equation*}
    \mathcal{K}_{ij}(\omega)=
    \frac{f_{ij}(\omega)}
    {\sqrt{f_{ii}(\omega)f_{jj}(\omega)}}.
\end{equation*}
The squared coherence 
\begin{equation*}
    0\leq \vert \mathcal{K}_{ij} \vert ^2 \leq 1,
\end{equation*}
measures the linear correlation between the pair of time series $X_{it}$ and $X_{jt}$ at frequency $\omega$. Expressing the complex-valued cross spectral density $f_{ij}(\omega)$ in polar form as $$f_{ij}(\omega) =  \vert f_{ij}(\omega) \vert \exp(\ci\varphi(\omega)),$$ where the phase spectrum $$\varphi_{ij}(\omega) = \mathrm{arg}(f_{ij}(\omega)),$$ measures the time shift of the signal at frequency $\omega$, the \emph{time delay} from variable $X_{it}$ to variable $X_{jt}$ is measured by $-\varphi_{ij}(\omega)/\omega$ \citep{wei1990time}. See \citet{geweke1982measurement} and \citet{ashby2019statistical} for connections between these frequency based measures of linear association and direction, and cross-correlation functions and Granger causality measures in the time domain.

\begin{figure}
    \includegraphics[width=0.8\linewidth]{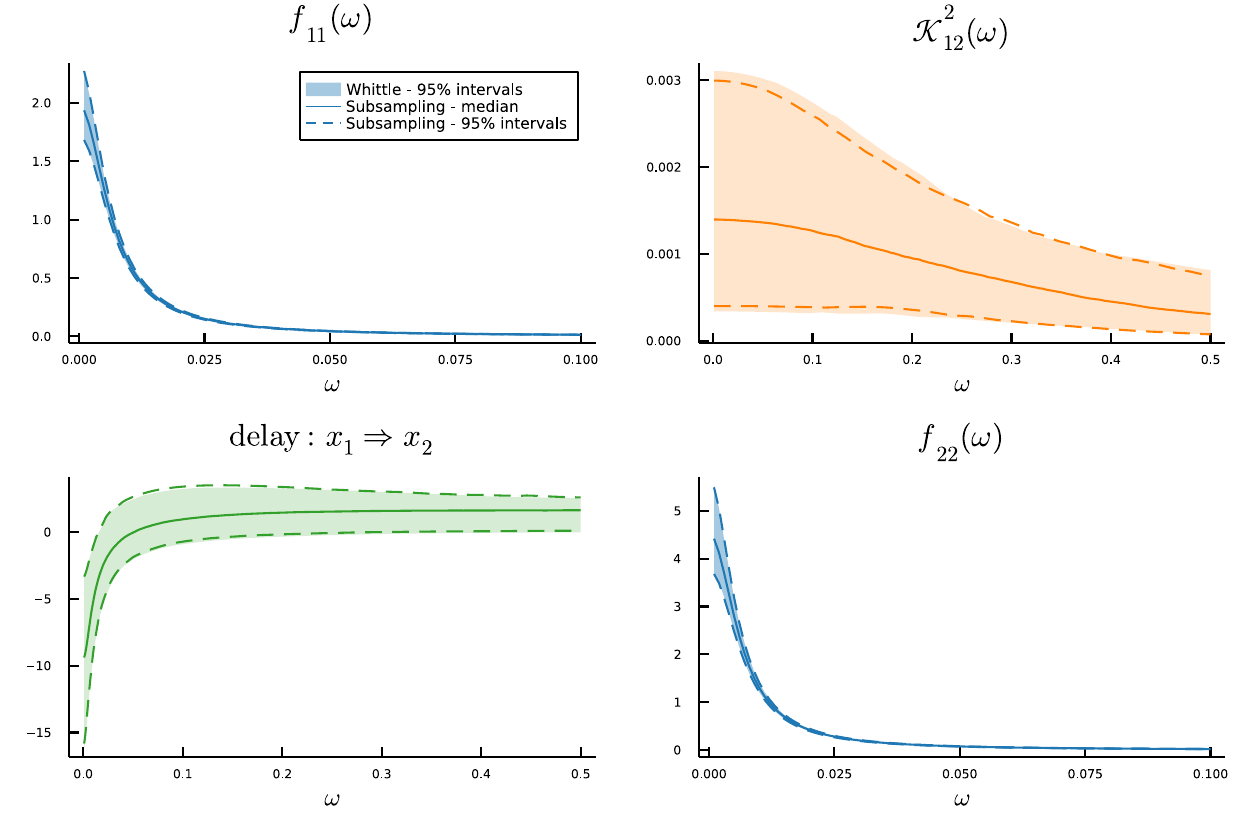}
    \caption{Posterior for the spectral density matrix in the VARTFIMA(0,2) model fitted to the Water velocity data. The plots on the diagonal are the marginal spectral densities for each time series. The plots above the diagonal are the squared coherence, and the plots below the diagonal are the time delays from the phase spectrum. The dashed lines in each subplot displays the posterior median and 95\% credible intervals from spectral subsampling MCMC. The shaded regions are the 95\% credible intervals from the Whittle posterior on the whole dataset.}\label{fig:WaterVelocityCoherence}
\end{figure}

Figure \ref{fig:WaterVelocityCoherence} plots the posterior distribution of these spectral density matrix quantities for the Water velocity dataset. The figure shows the marginal spectral densities (graphs on the diagonal) of each time series, squared coherence (above diagonal) and phase/time delay (below diagonal). Each figure displays the posterior mean and $95\%$ credible intervals from spectral subsampling MCMC as lines. The $95\%$ credible intervals from MCMC on the Whittle posterior based on the full dataset is shown as shaded regions. As expected from the previously shown parameter posteriors, we again see that subsampling gives virtually no distortion with respect to the Whittle posterior on the full dataset. The same is true for the temperature data; see Figure \ref{fig:SwedTempCoherence} in Appendix \ref{app:additionalresults}.

The interpretation of the lag delay is clearest in systems without feedback loops. This is likely to be the case only in the Water velocity data where the Denver River location ($x_2$) is located downstreams of the St Clair River location ($x_1$). The top right graph in Figure \ref{fig:WaterVelocityCoherence} shows, however, that there is essentially no coherence between the two locations at any frequency, so the lag delay is insignificant. This is most probably because the two locations are quite far apart (approx $100$ kilometers) and separated by Lake St Clair. The VARMA(1,1) model fitted in the next subsection picks up a sizeable coherence only at the very lowest frequencies and a positive delay at those frequencies (see Figure \ref{fig:WaterVelocityCoherenceVarma}), which is the expected sign since Denver River is downstream of St Clair River.


\subsection{Accuracy of the Whittle posterior for the time domain posterior}
The previous subsection shows that spectral subsampling MCMC gives very accurate approximations to the Whittle posteriors based on the full datasets. The Whittle likelihood can, however, be a poor approximation to the exact time domain likelihood \citep{contreras2006note}, at least for shorter time series. However, the original motivation for spectral subsampling MCMC is settings where the time series are very long, and this section demonstrates that the Whittle approximation is excellent in the three datasets in Section \ref{subsec:Data}.

\begin{figure}
    \includegraphics[width=0.8\linewidth]{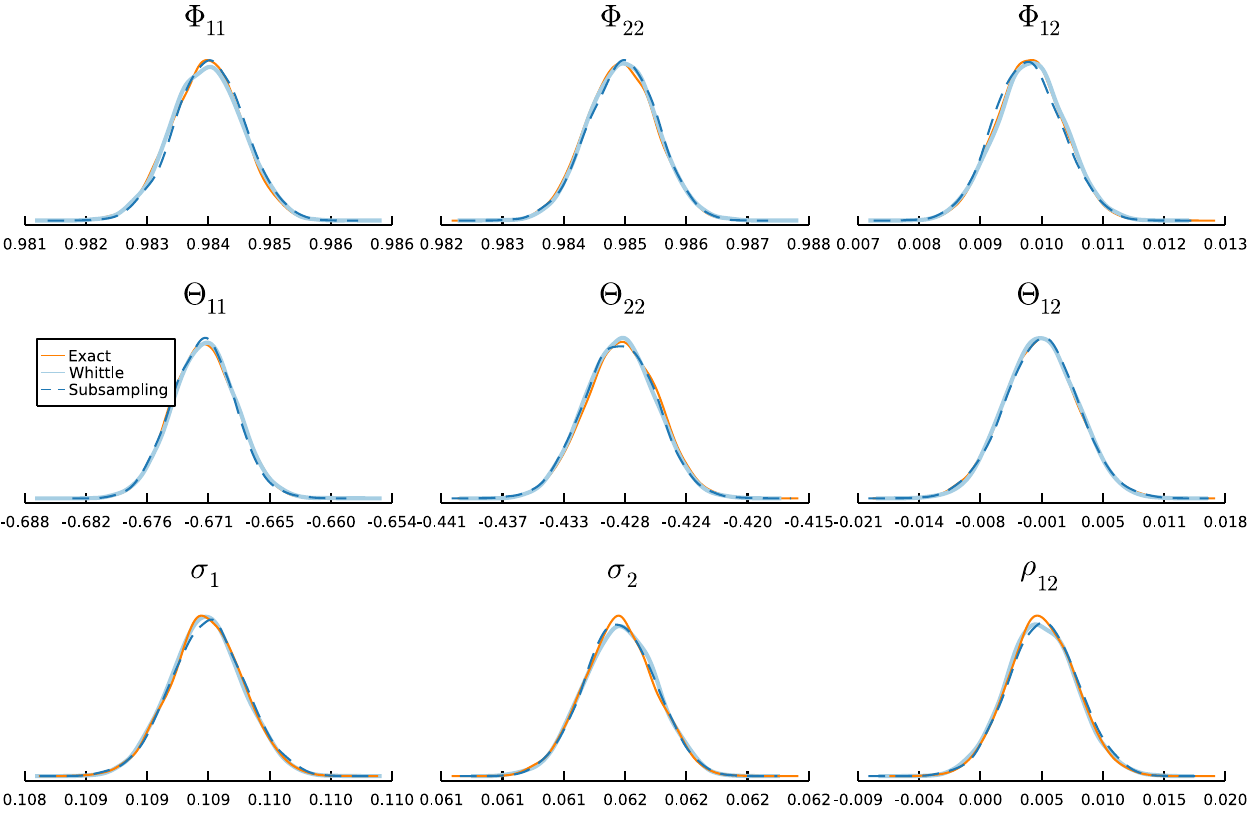}
    \caption{Kernel density estimates of a subset of the marginal posterior densities for the VARMA(1,1) model fitted to the Water velocity data. The solid orange densities are from MCMC on the exact time domain posterior, the solid blue densities are from MCMC using the Whittle posterior on the whole dataset, and the dashed densities are obtained from spectral subsampling MCMC.}\label{fig:WaterVelocityKDEarma11}
\end{figure}

\begin{figure}
    \includegraphics[width=0.7\linewidth]{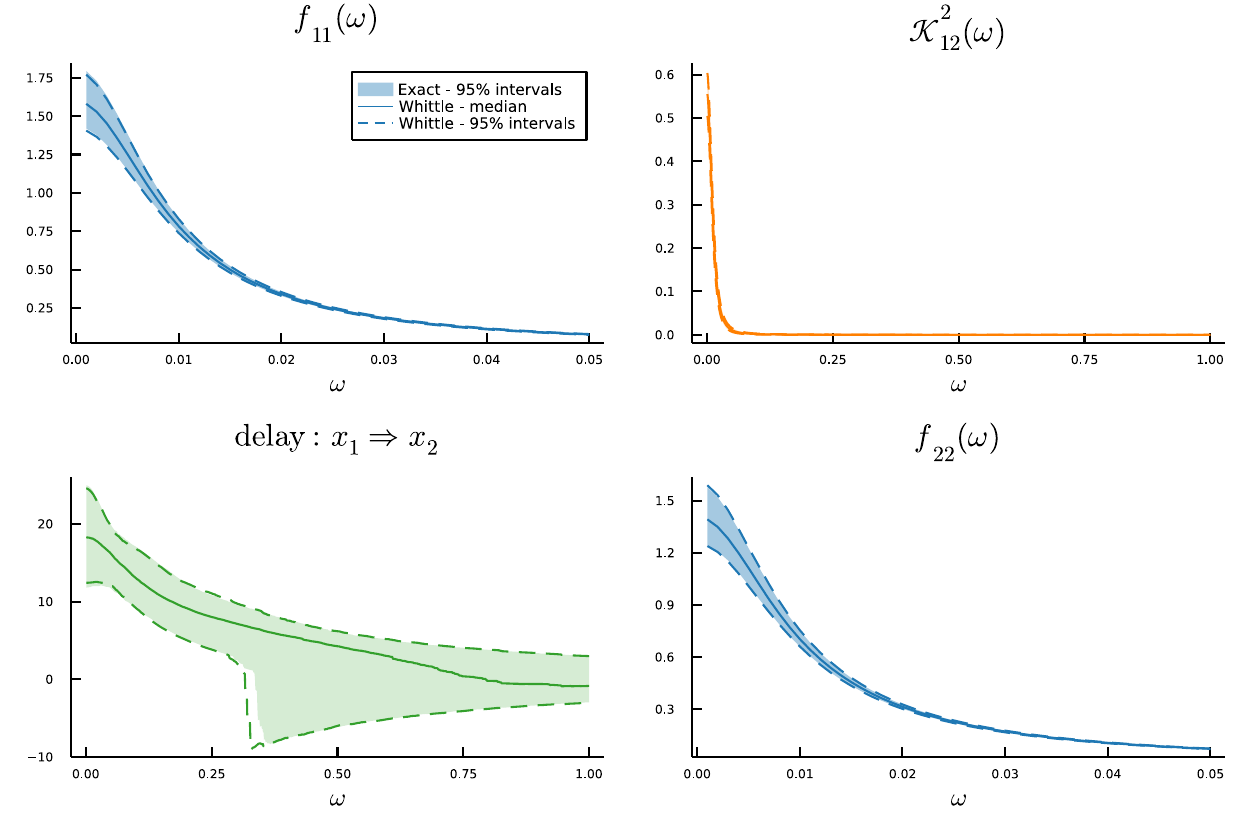}
    \caption{Posterior for the spectral density matrix in the VARMA(1,1) model fitted to the Water velocity data. The plots on the diagonal are the marginal spectral densities for each time series. The plot above the diagonal is the squared coherence, and the plot below the diagonal is the phase spectrums. The dashed lines in each subplot displays the posterior median and 95\% credible intervals from the Whittle posterior on the whole dataset. The shaded regions are the 95\% credible intervals from MCMC on the exact time domain posterior on the whole dataset.}\label{fig:WaterVelocityCoherenceVarma}
\end{figure}

\begin{figure}
    \includegraphics[width=0.8\linewidth]{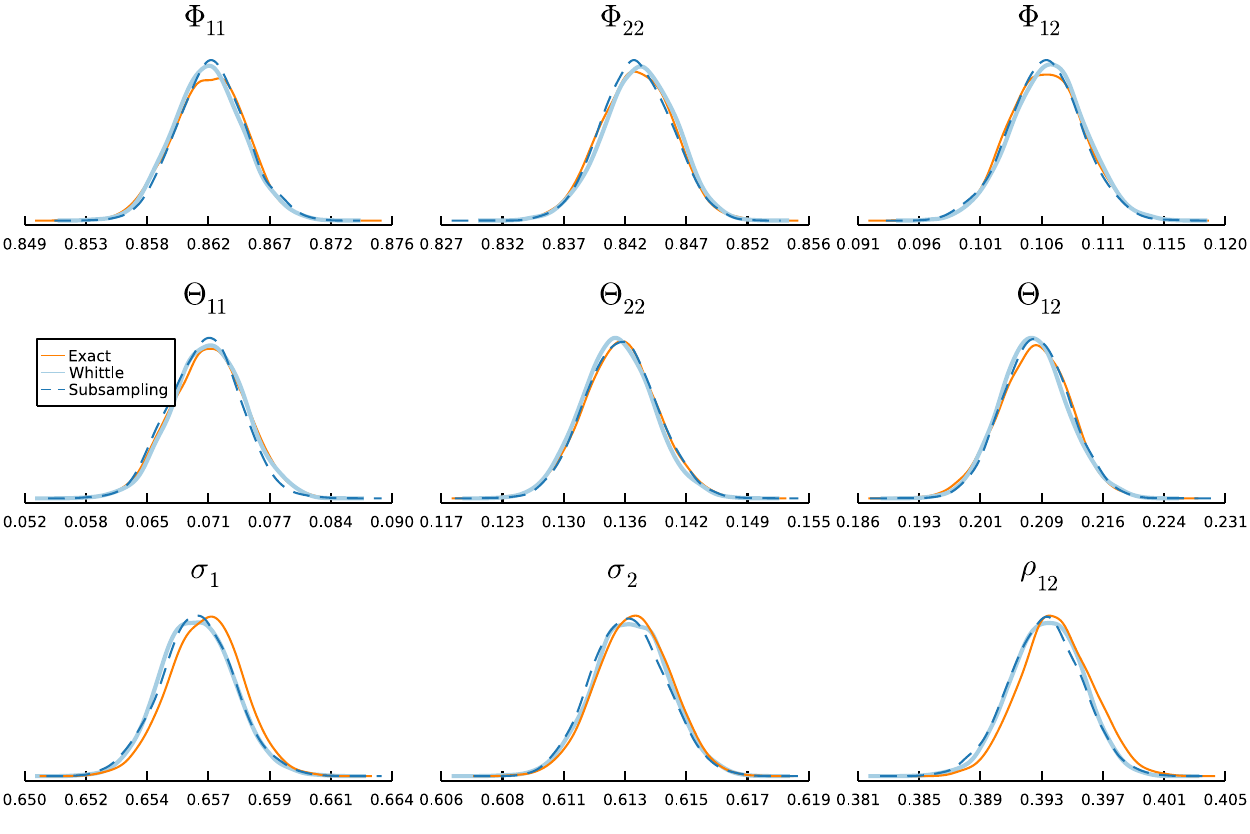}
    \caption{Kernel density estimates of a subset of the marginal posterior densities for the VARMA(1,1) model fitted to the Swedish temperature data. The solid orange densities are from MCMC on the exact time domain posterior, the solid blue densities are from MCMC using the Whittle posterior on the whole dataset, and the dashed densities are obtained from spectral subsampling MCMC.}\label{fig:SwedTempKDEarma11}
\end{figure}

\begin{figure}
    \includegraphics[width=0.8\linewidth]{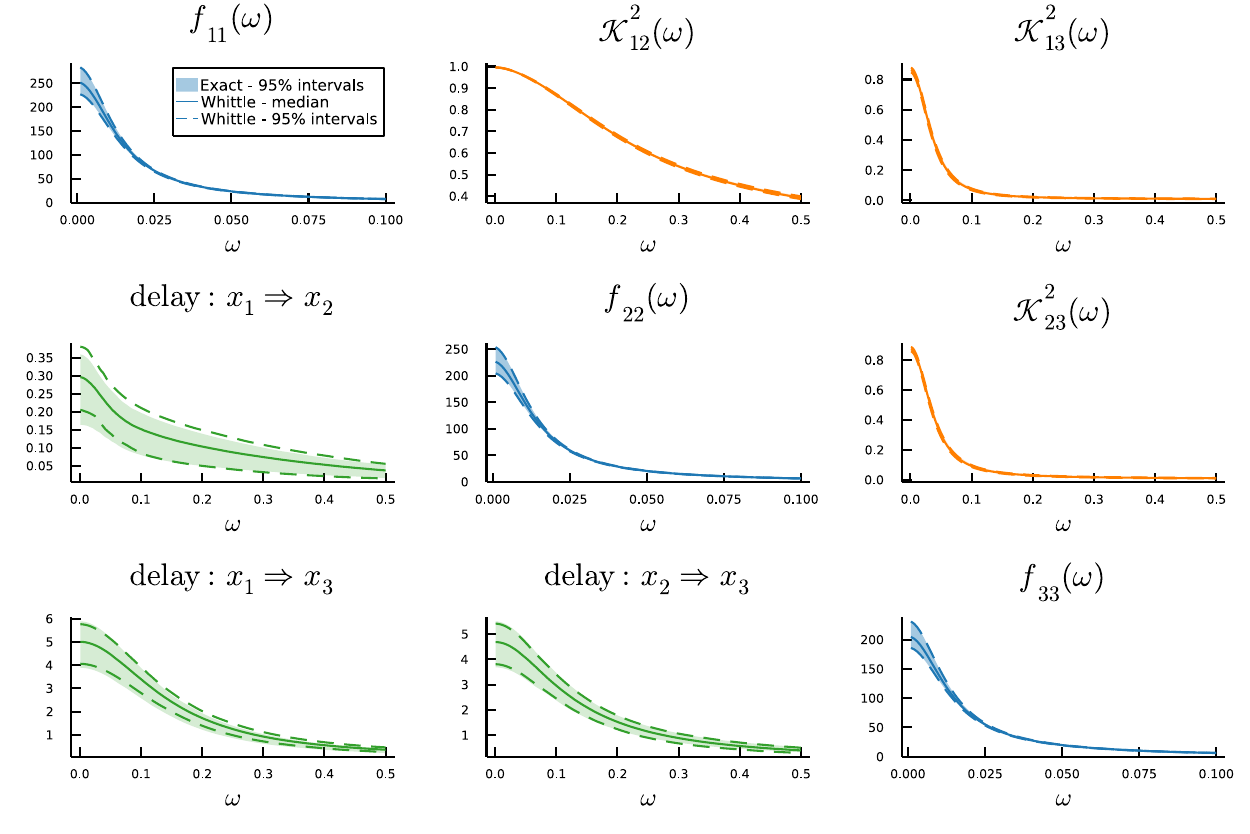}
    \caption{Posterior for the spectral density matrix in the VARMA(1,1) model fitted to the Swedish temperature data. The plots on the diagonal are the marginal spectral densities for each time series. The plot above the diagonal is the squared coherence, and the plot below the diagonal is the phase spectrums. The dashed lines in each subplot displays the posterior median and 95\% credible intervals from the Whittle posterior on the whole dataset. The shaded regions are the 95\% credible intervals from MCMC on the exact time domain posterior on the whole dataset.}\label{fig:SwedTempCoherenceVarma}
\end{figure}

The time domain likelihood is only computationally feasible for VARMA models, and hence we illustrate these results using a $\mathrm{VARMA}(1, 1)$ model for each dataset. Figure \ref{fig:WaterVelocityKDEarma11} shows the kernel density estimates of the posterior distribution of a subset of the parameters. Figure \ref{fig:WaterVelocityCoherenceVarma} plots the time domain and Whittle posteriors of the spectral density matrix based on the full Water velocity dataset. The results from the Whittle posterior are nearly indistinguishable from the exact time domain posteriors. Also, there are virtually no differences between the time domain posteriors and the whittle posteriors in the two other datasets; see Figure \ref{fig:SwedTempKDEarma11} and \ref{fig:SwedTempCoherenceVarma} for the temperature data and Figure \ref{fig:SthlmPollutionKDEarma11} in Appendix \ref{app:additionalresults} for the pollution data. In summary, we conclude from this and the previous subsection that spectral subsampling MCMC gives an accurate approximation of the posterior based on the exact time domain likelihood. 

We have implemented the exact time domain likelihood for VARMA models using the state space representation in the Python package \texttt{statsmodels} and fair timing comparisons with spectral subsampling MCMC are therefore difficult, but it is well known that time domain likelihoods are substantially more costly than their frequency domain (Whittle) counterparts, even without the extra speed-up from subsampling.

\section{Comparing ARTFIMA and VARTFIMA}\label{sec:univariate}

Our paper focuses on the computational performance of spectral subsampling MCMC on challenging multivariate processes, and we leave a detailed analysis of the empirical performance of the VARTFIMA model to a separate paper. However, a reviewer suggested comparing the proposed multivariate VARTFIMA model with univariate ARTFIMA models \citep{Sabzikar2019} for each series separately, and we present some results on this aspect below using the Swedish temperature data. 

\begin{figure}
    \includegraphics[width=0.7\linewidth]{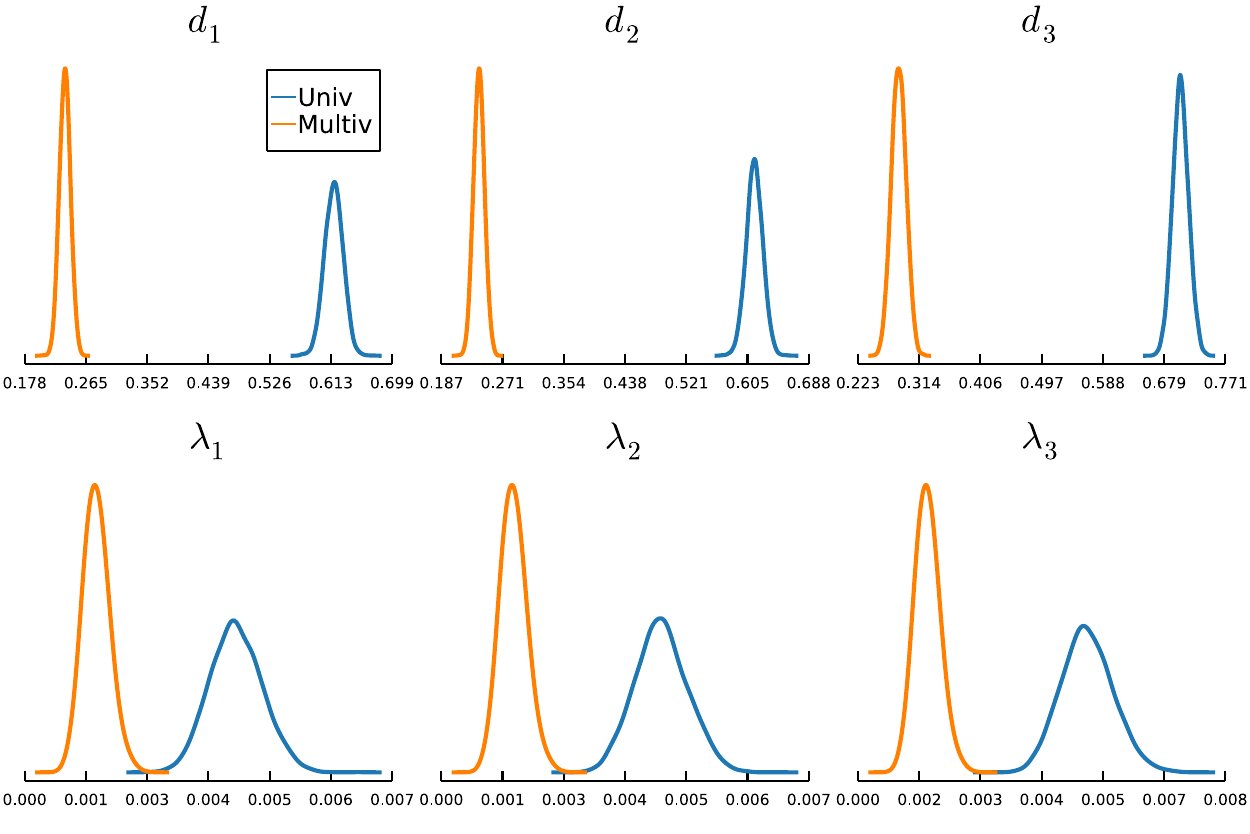}
    \caption{Comparing the marginal posteriors for the fractional differencing and tempering parameters in the multivariate VARTFIMA(2,0) model (blue) to univariate ARTFIMA models (orange) fitted to the Swedish temperature data. The multivariate model is restricted to have the same $\lambda$ for each time series and the posterior density for this common $\lambda$ is therefore repeated in each graph on the second row in the figure. Spectral subsampling MCMC is used for both models.}\label{fig:SwedTempKDEUniv}
\end{figure}

First, we compute the BIC approximation of the log marginal likelihood for a univariate version of the preferred VARTFIMA(2,0) model by restricting the diagonals of all $\Phi$ and $\Theta$ and $\Sigma$ to zero in the estimation, but allowing for different tempering parameter and fractional differencing for each series. This corresponds to fitting ARTFIMA models to each series separately, but still obtaining a single BIC that approximates the marginal likelihood of all time series; these BIC values are therefore comparable to the ones from the multivariate VARTFIMA models. The BIC for this univariate version is $312824$ which should be compared to the BIC of $335757$ for the multivariate VARTFIMA(2,0) in Table \ref{table:BIC}. The massive improvement in BIC from jointly modeling the time series in the VARTFIMA(2,0) model is expected since the series are cross-correlated; see Figure \ref{fig:SwedTempCoherence}.

Furthermore, the fractional differencing parameters can be quite different when modeling the time series jointly as in VARTFIMA compared to univariate ARTFIMA models. Figure \ref{fig:SwedTempKDEUniv} compares the marginal posteriors of the fractional differencing and tempering parameters in the multivariate VARTFIMA(2,0) model to univariate ARTFIMA(2,0) models in the Swedish temperature data. The VARTFIMA model in Figure \ref{fig:SwedTempKDEUniv} is restricted to have the same $\lambda$ in the three series; the results are quite similar when separate $\lambda$ are used. While the tempering parameters are rather similar for the univariate and multivariate models, the fractional differencing is substantially reduced for all three series when the multivariate model is used.

Finally, we compare the forecasting performance of VARTFIMA(2,0) model to the univariate ARTFIMA(2,0). The models are estimated on the same data as used above, i.e. the period from February 1, 2008 at 00:00 hours to March 25, 2022 at 16:00 hours, and the forecasts are evaluated over subsequent period from March 25, 2022 at 17:00 hours to May 1, 2022 at 6:00 hours, making up a test set with $878$ number of observations. Since the training data is very large and the estimates are precise, we do not update the estimates as we move across the test data. 

There are several ways of computing the forecasts from the VARTFIMA model. One way is by computing the autocovariance matrix function from an inverse FFT of the spectral density matrix at the estimated VARTFIMA parameter and then using the conditioning properties of the multivariate normal distribution to get the forecast, see \citet{Hamilton1995} for the theory and \citet{mcleod2008algorithms} for an R package implementation. This is an elegant and general approach, but does not scale well to large data sets: the \texttt{artfima} package in R, which uses the \citet{mcleod2008algorithms} R package for forecasting, crashes when using more than 20\% of the Swedish temperature data on a 32 GB RAM Linux machine. 

We take a more direct and computationally faster approach here by approximating the $\mathrm{VARTFIMA}(p,q)$ with a $\mathrm{VARMA}(p_\star,q)$ for a sufficiently large $p_\star$. The $\mathrm{VARTFIMA}(p,q)$ model is
\begin{equation}
    \big(I_r - \Phi_1L - \ldots - \Phi_p L^p \big)\Delta^{\mathbf{d},\v \lambda}(\mathbf{Y}_t-\boldsymbol \mu) = \Theta(L)\boldsymbol{\varepsilon}_t,
\end{equation}
where from the definition of $\Delta^{\mathbf{d},\v \lambda}$ we can express
\begin{equation}
\Delta^{\mathbf{d},\v \lambda} = I_r + B_1L + B_2L^2 + \ldots
\end{equation}
with $B_j = \mathrm{Diag}(b_j^{d_1,\lambda_1},\ldots,b_j^{d_r,\lambda_r})$ and $b_j^{d_k,\lambda_k} \equiv (-1)^j \binom{d_k}{j}e^{-\lambda_k j}$ for $j=1,2,\ldots$. We can therefore write
\begin{equation}
    \big(I_r - \Phi_1L - \ldots - \Phi_p L^p \big)\big(I_r + B_1L + B_2L^2 + \ldots\big) = I_r - \Pi_1L - \Pi_2L^2 - \ldots\big),
\end{equation}
where the $\Pi_k$ can be found by equating the coefficients term by term for each $L^k$ to obtain
\begin{equation}
    \Pi_k = \sum_{j=1}^p \Phi_j B_{k-j} - B_k,
\end{equation}
by defining $B_0 =  I_r$ and $B_j = \v 0$ for $j<0$. 
Since $b_j^{d_k,\lambda_k} \rightarrow 0$ as $j\rightarrow \infty$ exponentially fast eventually, if $\lambda_k>0$ \citep{Sabzikar2019} we have that the elements of $\Pi_k$ goes to zero as $k$ increases. We can therefore truncate this infinite order lag matrix polynomial at some lag $p_\star$ to get a $\mathrm{VARMA}(p_\star,q)$ approximation
\begin{equation}
   \Pi(L)(\mathbf{Y}_t-\boldsymbol \mu) = \Theta(L)\boldsymbol{\varepsilon}_t,
\end{equation}
where $I_r - \Pi_1 L - \ldots - \Pi_{p_{\star}}L^{p_{\star}}$. Forecasts $k$ steps ahead can now be obtained with established techniques for VARMA models \citep{tsay2013multivariate}; we use the \texttt{MTS} package in R. Figure \ref{fig:SwedTempForecastApprox} shows that $p_{\star}=10$ is more than sufficient for approximating the $\mathrm{VARTFIMA}(2,0)$ model used in the forecast evaluation below on the Swedish temperature data in that the forecast do not change beyond $p_{\star}=10$. 

\begin{figure}
    \includegraphics[width=0.9\linewidth]{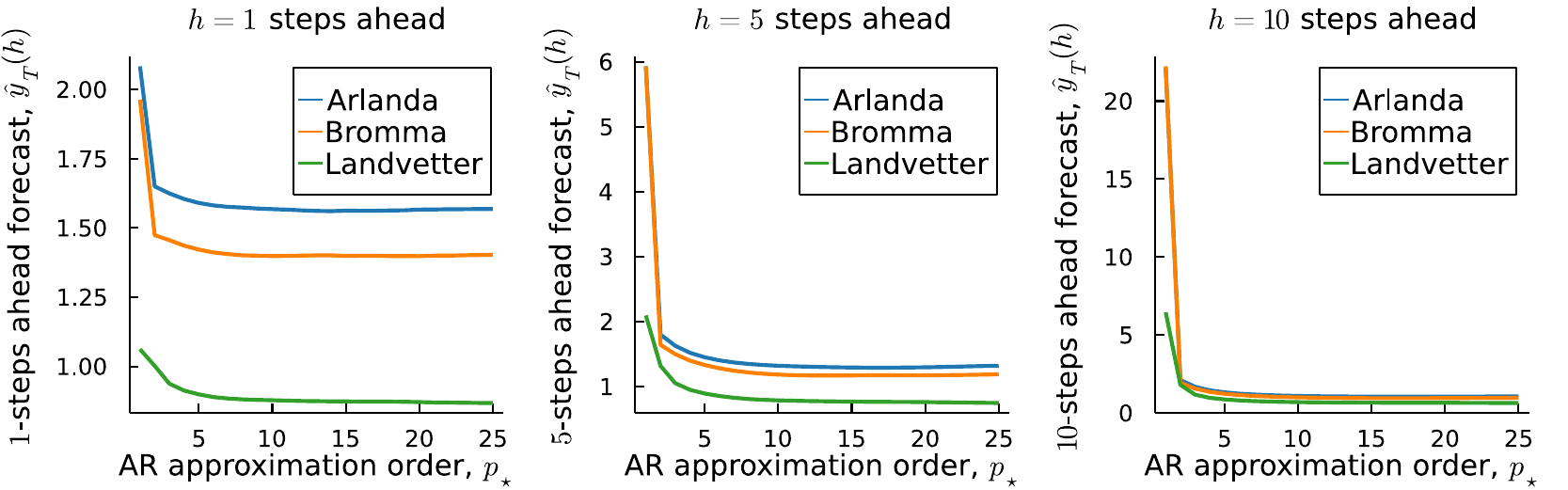}
    \caption{Investigating the convergence of forecasts as $p_\star$ increases in the $\mathrm{VARMA}(p_\star,0)$ approximation of the $\mathrm{VARTFIMA}(2,0)$ model. The forecasts are all produced standing at the last time point in the training data.}\label{fig:SwedTempForecastApprox}
\end{figure}

Figure \ref{fig:SwedTempRMSE} shows that the multivariate VARTFIMA(2,0) produces more accurate forecasts at all horizons than univariate ARTFIMA(2,0) models fitted to each time series separately, particularly at the longer horizons; note that a forecast horizon with $h=50$ corresponds to $10$ hours ahead. 

\begin{figure}
    \includegraphics[width=0.9\linewidth]{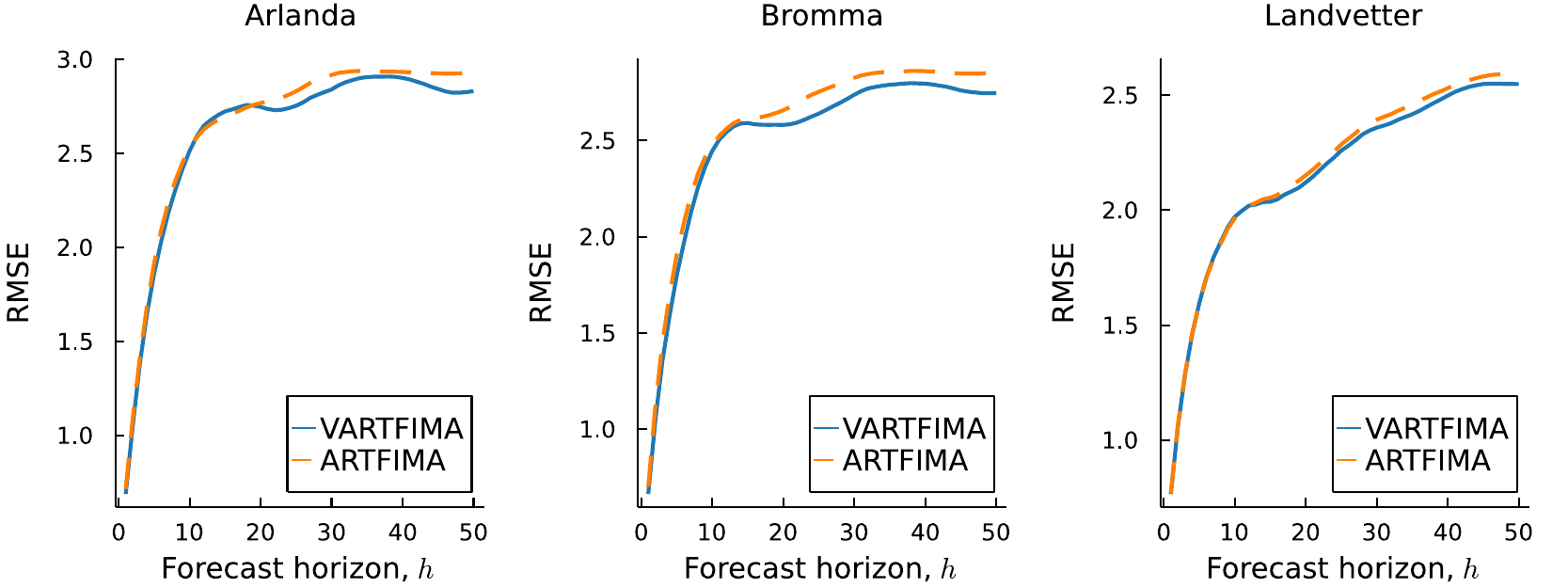}
    \caption{Comparing the out-of-sample RMSE forecasting performance of the VARTFIMA(2,0) model (solid blue) to univariate ARTFIMA(2,0) models fitted to each series (dashed orange) in the Swedish temperature data. Both models were fitted to the training data (Jan 3, 2016 - to Dec 21, 2018) and the estimates were then kept fixed throughout the forecasting period (Dec 21, 2018 - Dec 29, 2018).}\label{fig:SwedTempRMSE}
\end{figure}

\section{Conclusions}\label{sec:Conclusions}
Our paper proposes a subsampling MCMC approach for stationary multivariate time series models. Using a measure which takes into account both computing cost and the statistical inefficiency of likelihood estimators, we demonstrate a speed-up factor of up to two orders of magnitude on three datasets compared to MCMC using the full dataset. 

To test the proposed spectral subsampling MCMC in challenging problems, we propose a new multivariate time series model by extending the univariate ARTFIMA model to a multivariate setting. Some properties of this new model are derived, including its spectral density matrix, and our results show that the vector ARTFIMA model outperforms VARMA models in all three datasets. This suggest the VARTFIMA model as a useful model for modelling a wide range of multivariate time series and future work should investigate this more fully.

Our work demonstrates that MCMC sampling to explore the posterior based on the Whittle likelihood scales well to very long multivariate time series, and that the approximation to the exact time domain posterior is excellent. We further show that spectral subsampling can give an additional speed-up of two orders of magnitude on large time series in rather complex multivariate time series models with semi-long range dependence. The biggest challenge is to obtain good control variates in high-dimensional parameter spaces for models with potentially very non-Gaussian likelihoods. We are unable to obtain good enough control variates for the best fitting models on the shortest of the three dataset. This is probably the result of the VARMA and VARTFIMA class of models having challenging likelihood functions, especially the moving average (MA) part of the model. We demonstrate that longer time series are more amenable to subsampling than shorter ones, so spectral subsampling MCMC is a method most suitable to large scale problems, at least when it comes to VARMA/VARTFIMA models. Future research will extend the methodology to high-dimensional complex models by developing better control variates in high-dimensions that consider the local nonidentification aspects, and exploring the potential of alternative inference algorithms such as variational inference \citep{blei2017variational}. Variational inference with an estimated likelihood is less sensitive to the variability in the estimator \citep{tran2017variational}, and is thus an appealing alternative to explore.

\bibliographystyle{apalike}
\addcontentsline{toc}{section}{\refname}\bibliography{MultiSpectralMCMCSecondRev}
\appendix

\section{Proofs}\label{app:proofs}
\begin{proof}[Proof of Theorem \ref{thm:stationary}]
    Theorem 11.3.1 of \cite{brockwell1991time} shows that the condition $\Phi(z)$ for $|z| \leq 1$ implies that there exists $\epsilon>0$ such that the matrix $\Phi^{-1}(z)$ exists and has a power series expansion
    \begin{equation*}
        \Phi^{-1}(z) = \sum_{j=0}^\infty \mathbf{A}_j z^j, \text{ for } |z|<1+\epsilon, 
    \end{equation*}
    with the elements of $\mathbf{A}_j$ being absolutely summable.
    Inverting the operator $\Delta^{\mathbf{d},\v \lambda}$ gives
    \begin{equation*}
        \Delta^{\mathbf{-d},\v \lambda}\mathbf{Y}_t = \Big(\Delta^{-d_1,\lambda_1}Y_{1,t},\ldots,\Delta^{-d_r,\lambda_r}Y_{r,t}\Big)^\top,
    \end{equation*}
    where the inverted univariate operator is defined by \citep{Sabzikar2019}
    \begin{equation*}
        \Delta^{-d,\lambda}Y_t = (1-e^{-\lambda}L)^{-d}Y_t = \sum_{j=0}^\infty c_j^{-d,\lambda}Y_{t-j},
    \end{equation*}
    with $c_j^{-d,\lambda}=(-1)^j e^{-\lambda j}\binom{-d}{j}$. 
    Let $\Delta^{-\mathbf{d},\v \lambda}(z)$ be the $r \times r$ diagonal matrix with $k$th diagonal element equal to $(1-e^{-\lambda_k}z)^{-d_k}$.
    Direct matrix multiplication shows that the element in row $k$, column $l$ of the matrix $\Delta^{\mathbf{-d},\v \lambda}(z)\Phi^{-1}(z)$ is the product
    \begin{equation*}
        \Big(\sum_{i=0}^\infty c_i^{-d_k,\lambda_k} z^i \Big)
        \Big(\sum_{j=0}^\infty a_j^{(kl)} z^j \Big) = \sum_{v=0}^\infty b_v^{(kl)} z^v,  
    \end{equation*}
    where $a_j^{(kl)}$ is the element in row $k$, column $l$ of  $\mathbf{A}_j$ and $b_v^{(kl)} = \sum_{s=0}^v c_s^{-d_k,\lambda_k} a_{v-s}^{(kl)}$. \cite{Sabzikar2019}[Proof of Theorem 2.2a] show that the sequence $c_j^{-d,\lambda}$ is absolutely summable if $\lambda >0$ and $d \notin \mathbb{Z}$, and it is shown above that $\sum_{j=0}^\infty \vert a_j^{(kl)}\vert<\infty$. Since the product of two absolutely summable series is absolutely summable \citep[Ch. 4.17]{knopp1990theory}, $\sum_{v=0}^\infty b_v^{(kl)}$ is also absolutely summable for all $k,l$.
    
    Hence, by Proposition 3.1.1 in \cite{brockwell1991time} we can apply the operator $\Delta^{\mathbf{-d},\v \lambda}\Phi^{-1}(L)$ to both sides of
    \begin{equation}
        \Phi(L)\Delta^{\mathbf{d},\v \lambda}(\mathbf{Y}_{t}-\v \mu)=\Theta(L) \v \varepsilon_{t}
    \end{equation}
    to obtain
    \begin{equation}\label{eq:causal}
        \mathbf{Y}_{t} = \v \mu +  \v\Psi(L)\v \varepsilon_{t},
    \end{equation}
    where 
    \begin{equation*}
        \v\Psi(z)\equiv \sum_{j=0}^\infty \v\Psi_j z^j = \Delta^{\mathbf{-d},\v \lambda}(z)\Phi^{-1}(z)\v\Theta(z),\hspace{0.5cm}\text{ for } |z|\leq 1,
    \end{equation*}
    and the $\{ \v\Psi_j\}_{j=0}^\infty$ are absolutely summable elementwise since $\v \Theta(z)$ is a polynomial of finite order $q$. By Proposition 3.1.2 in \cite{brockwell1991time} the VARTFIMA process is therefore causal and stationary.
    
    The causal representation in \eqref{eq:causal} shows that $\{ \v\Psi_j\}_{j=0}^\infty$ acts as a time invariant linear filter on the iid sequence $\{ \v\varepsilon_t\}_{t=0}^\infty$. The spectral density of $\mathbf{Y}_{t}$ therefore follows from \citet[Theorem 11.8.3]{brockwell1991time} and is
    \begin{align*}
        f_{\mathbf{Y}}(\omega) &=  \v\Psi(e^{-i\omega})f_{\v \varepsilon}(\omega) \v\Psi(e^{-i\omega})^H \\
        &=\frac{1}{2\pi} \Delta^{\mathbf{-d},\v \lambda}(e^{-i\omega}) \Phi^{-1}(e^{-i\omega})\Theta(e^{-i\omega}) \Sigma_{\varepsilon} \Theta^H(e^{-i\omega})  \Phi^{-H}(e^{-i\omega})  \Delta^{\mathbf{-d},\v \lambda}(e^{-i\omega})^H,
    \end{align*}
    where $f_{\v \varepsilon}(\omega)=\Sigma_{\varepsilon}$ is the spectral density matrix of the white noise process $\{ \v\varepsilon_t\}$ and $\Delta^{\mathbf{-d},\v \lambda}(z) = \mathrm{Diag}\big((1-e^{-\lambda_1}z)^{-d_1},\ldots,(1-e^{-\lambda_r }z)^{-d_r}\big).$
\end{proof}

\section{Raw data and additional empirical results}\label{app:additionalresults}

\begin{figure}[H]
    \includegraphics[width=0.32\linewidth]{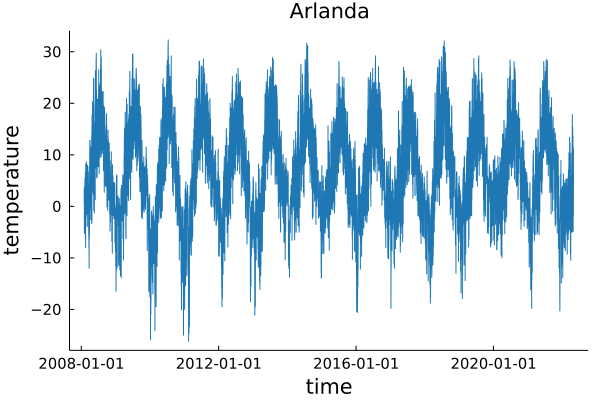}
    \includegraphics[width=0.32\linewidth]{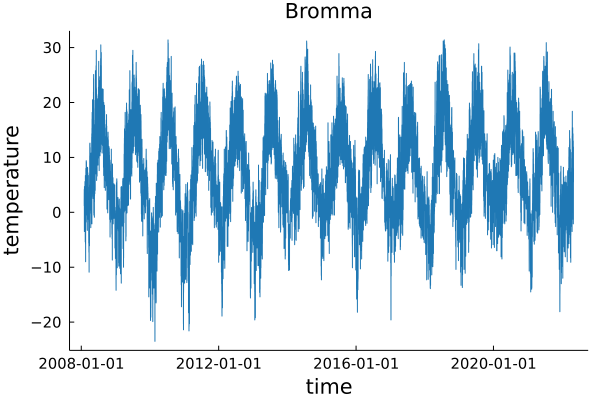}
    \includegraphics[width=0.32\linewidth]{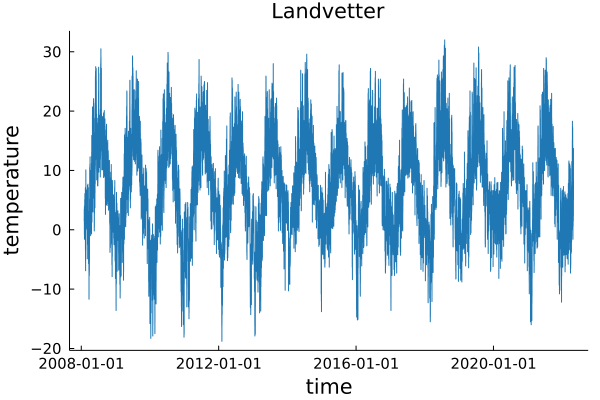}
    \caption{Swedish temperature data after interpolation, but before deseasoning.}\label{fig:SwedTempRawData}
\end{figure}

\begin{figure}[H]
    \includegraphics[width=0.35\linewidth]{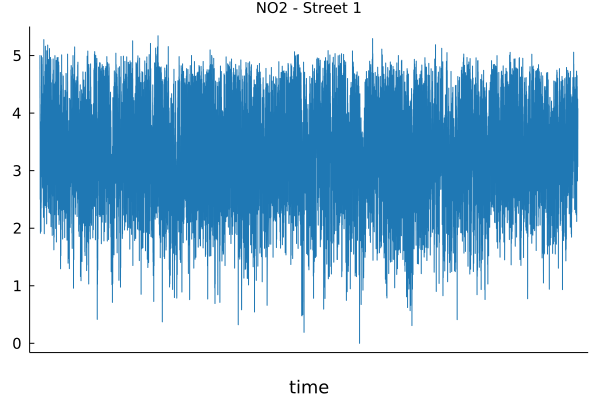}
    \includegraphics[width=0.35\linewidth]{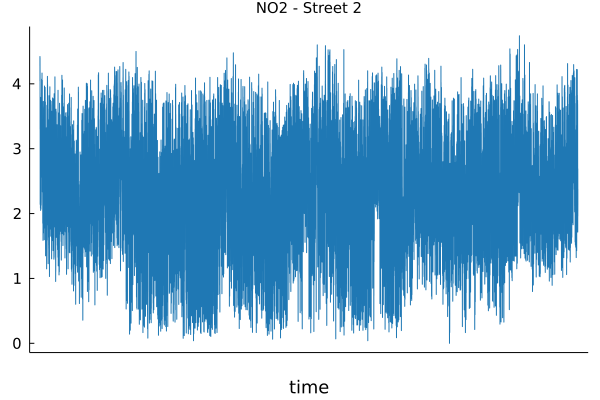} \\
    \includegraphics[width=0.35\linewidth]{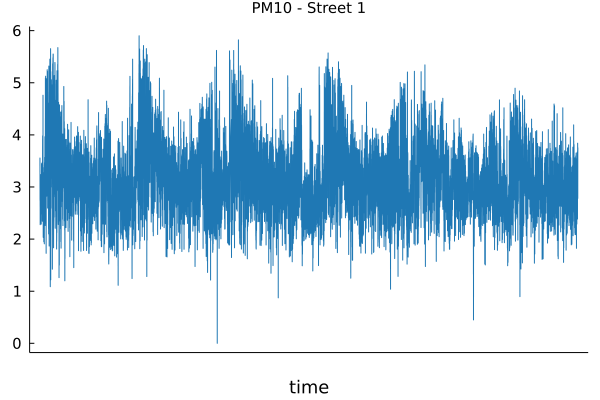}
    \includegraphics[width=0.35\linewidth]{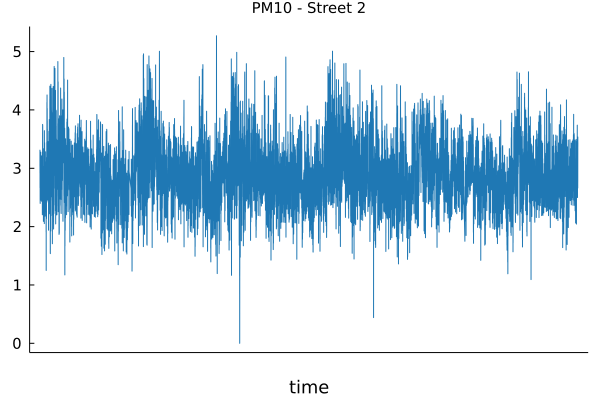}
    \caption{Stockholm air pollution data after interpolation and logarithmic transform, but before deseasoning.}\label{fig:SthlmPollutionRawData}
\end{figure}

\newpage

\begin{figure}
    \includegraphics[width=0.32\linewidth]{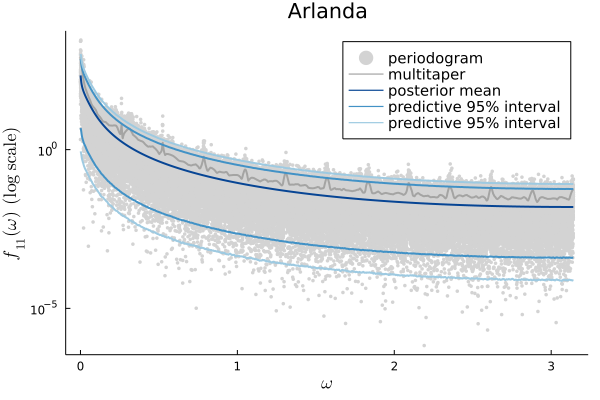}
    \includegraphics[width=0.32\linewidth]{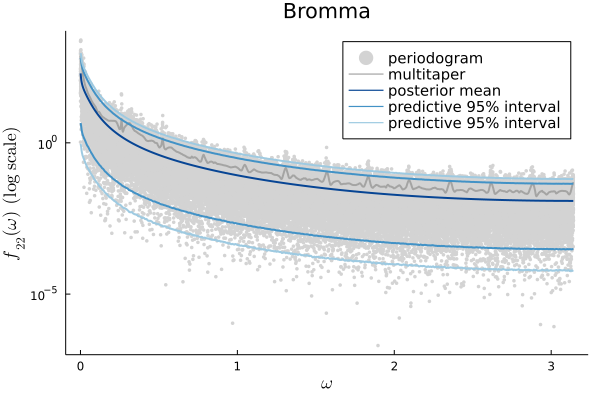}
    \includegraphics[width=0.32\linewidth]{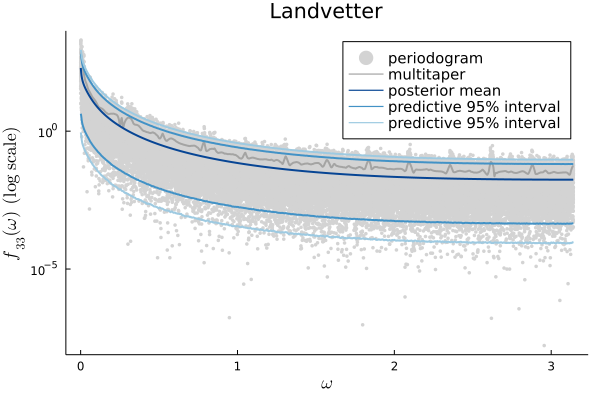}
    \caption{Posterior predictive fit of the univariate periodogram data for the VARTFIMA(2,0) model fitted to the Swedish temperature data. The predictive intervals are obtained by simulation from the asymptotic Whittle distribution $\mathcal{I}(\omega)\sim\mathrm{Expon}(f_{\v\theta}(\omega))$ with parameters $\v\theta$ drawn from the VARTFIMA(1,1) posterior.}\label{fig:SwedTempSpectral}
\end{figure}

\begin{figure}
    \includegraphics[width=0.4\linewidth]{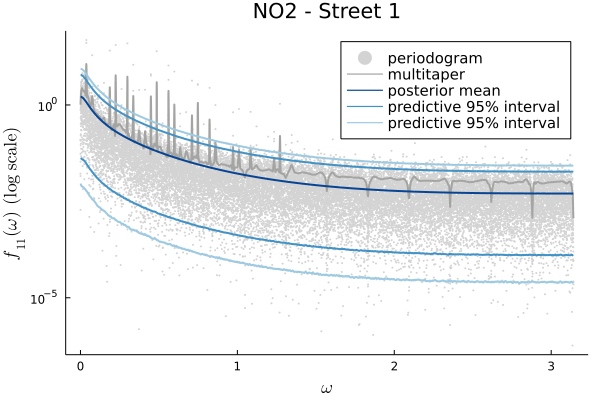}
    \includegraphics[width=0.4\linewidth]{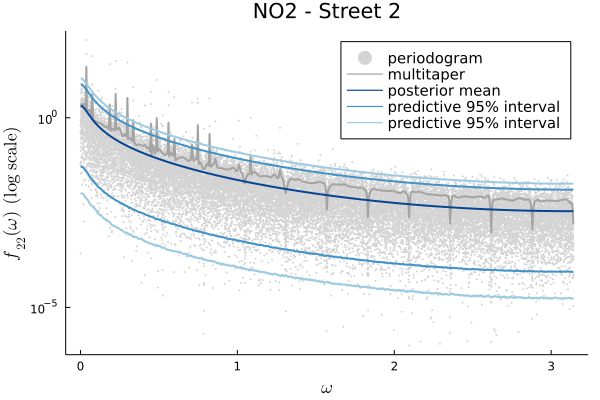} \\ 
    \includegraphics[width=0.4\linewidth]{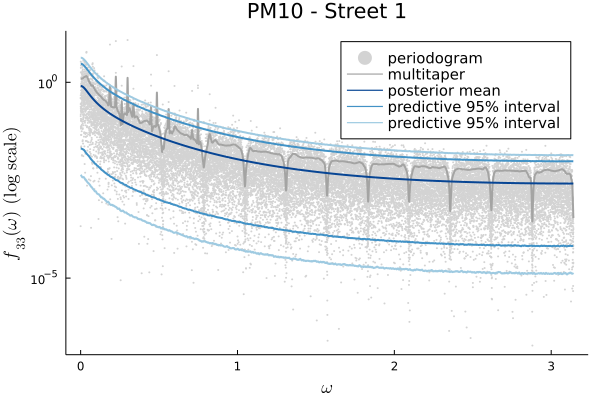}
    \includegraphics[width=0.4\linewidth]{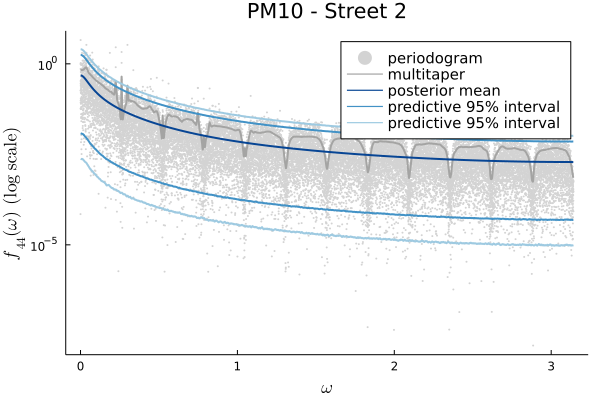}
    \caption{Posterior predictive fit of the univariate periodogram data for the VARTFIMA(2,0) model fitted to the Stockholm pollution data. The predictive intervals are obtained by simulation from the asymptotic Whittle distribution $\mathcal{I}(\omega)\sim\mathrm{Expon}(f_{\v\theta}(\omega))$ with parameters $\v\theta$ drawn from the VARTFIMA(1,1) posterior.}\label{fig:SthlmPollutionSpectral}
\end{figure}

\begin{figure}
    \includegraphics[width=0.80\linewidth]{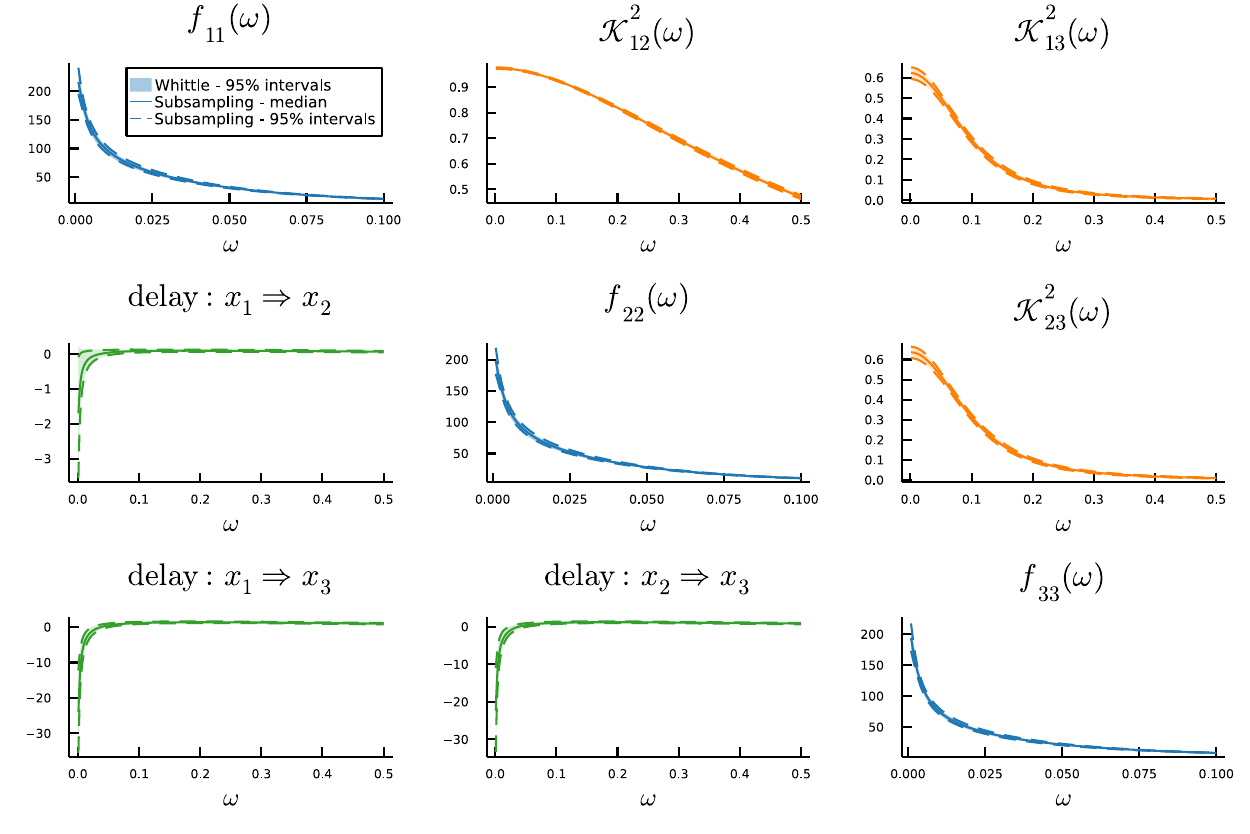}
    \caption{Posterior for the spectral density matrix in the VARTFIMA(2,0) model fitted to the Swedish temperature data. The plots on the diagonal are the marginal spectral densities for each time series. The plots above the diagonal are the squared coherences, and the plots below the diagonal are the time delays from the phase spectrum. The dashed lines in each subplot display the posterior median and 95\% credible intervals from spectral subsampling MCMC. The shaded regions are the 95\% credible intervals from the Whittle posterior on the whole dataset.}\label{fig:SwedTempCoherence}
\end{figure}

\begin{figure}
    \includegraphics[width=0.9\linewidth]{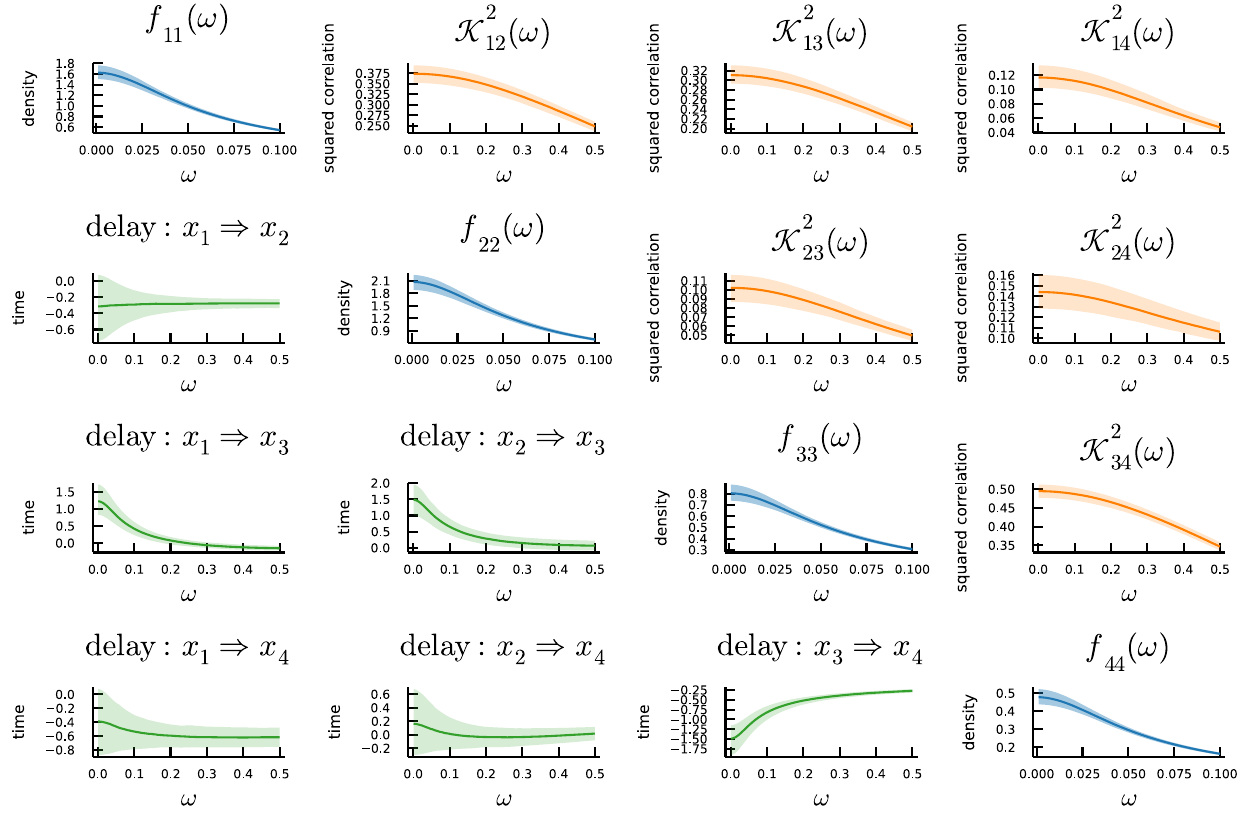}
    \caption{Posterior for the spectral density matrix in the VARTFIMA(2,0) model fitted to the Stockholm pollution data using the Whittle posterior on the whole dataset. The plots on the diagonal are the marginal spectral densities for each time series. The plots above the diagonal are the squared coherence, and the plots below the diagonal are the time delays from the phase spectrum. The posterior mean is marked out with a solid line and the shaded regions are the 95\% credible intervals.}\label{fig:SthlmPollutionCoherence}
\end{figure}

\begin{figure}
    \includegraphics[width=0.95\linewidth]{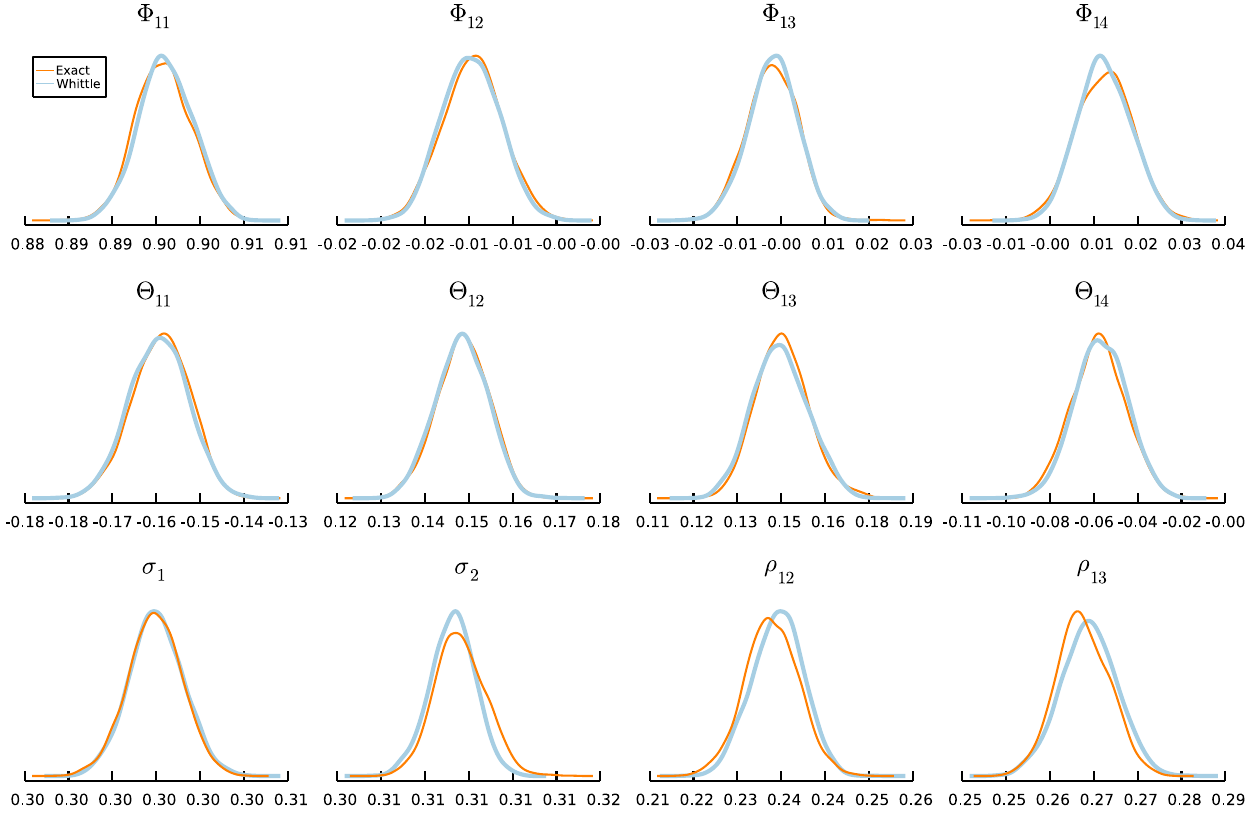}
    \caption{Kernel density estimates of a subset of the marginal posterior densities for the VARMA(1,1) model fitted to the Stockholm pollution data. The solid orange densities are from MCMC on the exact time domain posterior and the solid blue densities are from MCMC using the Whittle posterior on the whole dataset.}\label{fig:SthlmPollutionKDEarma11}
\end{figure}

\end{document}